\documentclass{article}
\usepackage{mystyle} 

\title{\bf The Complexity of Verifying Strategic Abilities in Succinct Systems}

\author[1]{Yifan He}
\author[2]{Munyque Mittelmann}
\author[3]{Aniello Murano}
\author[4]{Michael Thielscher}

\affil[1]{Griffith University, Australia}

\affil[2]{CNRS, LIPN, Sorbonne Paris North University, France}

\affil[3]{University of Naples Federico II, Italy}

\affil[4]{University of New South Wales, Australia}

\date{}

\begin{document}

\maketitle

\begin{abstract}    
    This paper studies the complexity of model checking strategic abilities in multi-agent systems under imperfect information, with particular focus on systems represented succinctly. Such a representation can describe systems that are exponentially larger than their input and is particularly relevant to practical implementations of model checkers such as MCMAS. The logics we consider include epistemic strategy logic supporting partial binding (\SLKP), and its syntactic fragments: epistemic strategy logic (\SLK) and variants of alternating-time temporal logics. We show that succinct model checking for \SLKP and \SLK is $\aexp(poly)$-complete, while objective \ATL and \ATLS are $\ptime^{\nexptime}$-complete. The latter corrects a previous \nexptime-completeness result for model checking succinct imperfect-information systems. Under subjective semantics, $\ptime^{\nexptime}$-completeness already holds for one-step Coalition Logic, and therefore also for \ATL and \ATLS. 
\end{abstract}



\section{Introduction}






Logics of strategic ability provide formal languages for reasoning about what agents, or coalitions of agents, can achieve in multi-agent systems. Alternating-time Temporal Logic (ATL) and Coalition Logic (CL) express the abilities of coalitions, whereas Strategy Logic and its epistemic extensions make strategies explicit objects that can be quantified and bound to agents. Under imperfect information, strategic reasoning is particularly subtle, since the choices prescribed by a strategy must be compatible with what an agent can observe. In this work, we focus on uniform memoryless strategies, a standard semantics for imperfect-information strategic reasoning and one that is directly relevant to practical verification frameworks such as \emph{MCMAS} and \emph{MCMAS-SLK}~\cite{LomuscioQR17,vcermak2018practical}.

The complexity of strategic model checking depends crucially on how the system itself is represented. Classical model-checking results are typically measured with respect to an explicitly represented transition system or concurrent game structure, in which states and transitions are enumerated as part of the input. This measure can substantially underestimate the complexity faced by a practical verifier. A system described by $n$ Boolean variables may induce exponentially many global states, while a modelling language can represent the same system by a description whose size is polynomial in $n$. Consequently, an algorithm that is polynomial in the size of the induced explicit structure need not be polynomial in the size of the description actually supplied to the model checker.

This distinction was already emphasised by \citeauthor{van2006complexity} in their study of practical ATL model checking under perfect information~\cite{van2006complexity}. Motivated by the input language used by the model checker \emph{Mocha}~\cite{alur1998mocha}, they analysed ATL and CL over the Simple Reactive Modules Language (SRML), a compact guarded-command representation of multi-agent systems. Although ATL and CL model checking are polynomial over explicitly enumerated models, the corresponding problems over SRML are, respectively, \exptime-complete and \pspace-complete. \citeauthor{huang2015complexity}~\cite{huang2015complexity} subsequently pursued the same question for multi-agent systems under imperfect information and memoryless strategies. They considered both concurrent and symbolic representations, with the latter encoding initial states, protocols, observations, and transitions by Boolean formulas. This symbolic setting is particularly natural for BDD- and SAT-based verification: modelling languages used by tools such as \emph{NuSMV}, \emph{MCMAS}, and \emph{MCK} can be translated into representations of this form in a natural way~\cite{huang2015complexity}.

Despite this practical motivation, the exact complexity landscape for succinct strategic model checking is not fully understood. \citeauthor{huang2015complexity}~\cite{huang2015complexity} studied succinct model checking for CTL, CTL$^*$, \ATL, and \ATLS, with the latter two logics expressing strategic abilities under imperfect information. In particular, they claimed \nexptime-completeness for succinct model checking of imperfect-information ATL and ATL$^*$ under their syntax; we revisit this classification in the present paper. In contrast, uniform-memoryless \SLK model checking over explicitly represented interpreted systems is known to be \pspace-complete~\cite{vcermak2018practical}. However, although its corresponding implementation \emph{MCMAS-SLK} operates on symbolic descriptions, to the best of our knowledge the complexity of \SLK model checking with respect to a succinct symbolic representation has not previously been established; this problem was already identified as a direction for future work by \citeauthor{huang2015complexity}~\cite{huang2015complexity}.

To study these problems in a common setting, we introduce \SLKP, an epistemic Strategy Logic with \emph{partial binding}. Standard \SLK evaluates temporal objectives under complete strategy assignments. This is inconvenient for representing ATL under memoryless semantics: an ATL coalition fixes uniform strategies only for its members, whereas agents outside the coalition remain unrestricted. Partial binding removes this discrepancy. Strategies can be quantified and bound only to selected agents, and path quantification ranges over all outcomes compatible with the currently bound strategies. This allows \SLK and ATL-style logics to be treated within a common framework. For the latter, we consider both the objective semantics, in which a coalition strategy is required to succeed from the current state, and the subjective semantics, in which the same strategy must succeed from all states considered possible by members of the coalition.

Our contributions provide a unified and tight complexity classification for these logics under both explicit and succinct representations, as summarised in Table~\ref{tab:result}. We establish the first succinct-input complexity results for \SLK and for the partial-binding extension \SLKP, while also determining the explicit-system complexity of \SLKP. For ATL-style logics, we give exact bounds under both objective and subjective semantics, including the one-step fragment \CL. In doing so, we revisit the succinct imperfect-information ATL and ATL$^*$ results of \citeauthor{huang2015complexity}~\cite{huang2015complexity}, identify problems in the previous classification, and obtain corrected bounds both for the conventional ATL syntax used here and for the separated strategic/path syntax of that work. Our lower-bound results also show that the succinct hardness of \SLK and \SLKP persists even when epistemic operators and the until operator are unavailable.

Together, these results clarify how the move from an explicitly enumerated system to a compact symbolic description changes the complexity of model checking across several widely used strategic logics. They also show that the objective and subjective interpretations, which coincide in complexity for some explicit-state settings, can behave differently once the system is represented succinctly. The resulting landscape is intended to complement the explicit-state complexity results traditionally used to analyse these logics and to provide bounds that more directly reflect the compact representations used by symbolic verification tools.

\begin{table}[!ht]
    \centering
    \footnotesize
    \begin{tabular}{|c|c|c|c|c|}
    \hline
        Logic & \multicolumn{2}{c|}{Explicit} & \multicolumn{2}{c|}{Succinct}   \\
       \hline
       \SLKP &  \multicolumn{2}{c|}{$\textcolor{blue}{\pspace}$} &  \multicolumn{2}{c|}{\textcolor{blue}{$\aexp(poly)$}}    \\
       \SLK & \multicolumn{2}{c|}{$\pspace$~\cite{vcermak2018practical}} &  \multicolumn{2}{c|}{\textcolor{blue}{$\aexp(poly)$}}  \\
    \hline
       & Objective & Subjective &  Objective & Subjective \\
    \cline{2-5}
       \ATLS & $\pspace$~\cite{bulling2010model} & $\pspace$~\cite{bulling2010model}& \textcolor{red}{$\nexptime$ \cite{huang2015complexity} $\rightarrow \ptime^{\nexptime}$}  & \textcolor{blue}{$\ptime^{\nexptime}$} \\
        \ATL & $\deltat{2}$~\cite{jamroga2008model}& $\deltat{2}$~\cite{jamroga2008model} & \textcolor{red}{$\nexptime$~\cite{huang2015complexity} $\rightarrow \ptime^{\nexptime}$}  & \textcolor{blue}{$\ptime^{\nexptime}$} \\
        \CL & $\ptime$~\cite{bulling2010model} & $[\np,\deltat{2}]$~\cite{jamroga2015logical} & $\pspace$~\cite{van2006complexity}  & \textcolor{blue}{$\ptime^{\nexptime}$} \\
        \hline
    \end{tabular}
    \caption{Model-checking complexities under explicit and succinct representations.}
    \label{tab:result}
\end{table}

The remainder of the paper is organised as follows. Section~2 introduces explicit and succinct transition systems. Section~3 defines \SLKP, and Section~4 relates \SLK and the ATL-style logics to this framework. Section~5 establishes the complexity results. Section~6 discusses their relationship with previous work, including a correction of the earlier succinct ATL and ATL$^{*}$ model-checking result by \citeauthor{huang2015complexity}~\cite{huang2015complexity}, and Section~7 concludes.

\section{Preliminaries}
In this section, we review the notion of \emph{nondeterministic} transition systems and their succinct representation~\cite{huang2015complexity}. 

\begin{definition}\label{def:explicitsystem}
   A nondeterministic transition system $\ISys$ over atomic propositions $\APf$ is a tuple $\langle \Agp,\setposp,\initpos, \{\Act_{\ag}\}_{\ag \in \Agp},\{\protocol_{\ag}\}_{\ag \in \Agp},\{\observe_{\ag}\}_{\ag \in \Agp},\trans,\val\rangle$, where\/:
   \begin{itemize}
       \item $\Agp$ is a set of agents.
       \item $\setposp$ is a set of possible states.
       \item $\initpos \subseteq \setposp$ is a set of initial states.
       \item $\Act_{\ag}$ is a finite non-empty set of actions for agent $\ag$. We denote by $\decision = \prod\limits_{\ag \in \Agp} \Act_{\ag}$ and $\Act = \bigcup_{\ag \in \Agp} \Act_{\ag}$ the sets of possible joint actions and individual actions, respectively.
      \item $\protocol_{\ag}:\setposp \rightarrow 2^{\Act_{\ag}}$ is the set of legal actions of $\ag$ at each state, and $\protocol:\setposp \rightarrow 2^{\decision}$, where $\protocol(\pos)=\{\profile \mov \mid \forall \ag \in \Agp.\profile \mov (\ag) \in \protocol_{\ag}(\pos)\}$, is the global protocol. Protocols may be empty at unreachable states.
       \item $\observe_{\ag}: \setposp \rightarrow \mathcal{O}$ provides each agent an observation at each state.
       \item $\trans \subseteq \setposp \times \decision \times \setposp$ is the nondeterministic transition relation, where $(\pos,\profile \mov,\pos') \in \trans$ only if $\profile \mov \in \protocol(\pos)$. We denote by $\setpos \subseteq \setposp$ the set of states reachable from those in $\initpos$ by following $\trans$.
      
       \item $\val: \setposp \rightarrow 2^{\APf}$ is the labelling function.
   \end{itemize}
\end{definition} 
We call a transition system \emph{well formed} if, for every reachable state $\pos\in\setpos$, (i) $\protocol_{\ag}(\pos)\neq\emptyset$ for every $\ag\in\Agp$, and (ii) for every $\profile\mov\in\protocol(\pos)$ there exists $\pos'\in\setposp$ such that $(\pos,\profile\mov,\pos')\in\trans$. Thus, no totality requirement is imposed on unreachable valuations in $\setposp\setminus\setpos$. Throughout the paper we assume that transition systems are well formed.

Two states $\pos,\pos' \in \setpos$ are indistinguishable by agent $\ag$ (denoted by $\pos \obsrel[\ag] \pos'$) if $\observe_{\ag}(\pos)=\observe_{\ag}(\pos')$. In this paper, we make the standard assumption that the transition system is \emph{uniform}, meaning that for every $\pos,\pos' \in \setpos$ and $\ag \in \Agp$, if $\pos \obsrel[\ag] \pos'$, then $\protocol_{\ag}(\pos)=\protocol_{\ag}(\pos')$. Namely, every agent knows its legal actions at every state. 

For a finite set $X$ of propositional variables, let
$\Prop(X)$ denote the set of Boolean formulas over $X$, and $\Val(X)=\{\top,\bot\}^{X}$ denote the set of Boolean valuations of $X$. We write $X'=\{v' \mid v\in X\}$ for a disjoint copy of $X$, whose elements represent the next-state values of the variables in $X$. Then, a succinct representation of a transition system is defined as follows. It is known that modelling languages of some model checkers, e.g., \emph{NuSMV}~\cite{cimatti2002nusmv}, \emph{MCMAS}~\cite{LomuscioQR17} and \emph{MCK}~\cite{gammie2004mck}, can be naturally converted into such a representation~\cite{huang2015complexity}.


\begin{definition}
\label{def:symbolic-transition-system}
A \emph{succinct transition system} is a tuple
$$\ISyss=\left\langle \Var_\env,\Init_\env,\Trn_\env,\{\Var_\ag,\OVar_\ag,\Act_\ag,\Init_\ag,\Prot_\ag,\Trn_\ag\}_{\ag\in\Agp} \right\rangle$$, where\/:
\begin{itemize}
    \item $\Var_e$ is a finite set of Boolean variables describing the environment state.
    \item For every $\ag\in\Agp$, $\Var_\ag$ is a finite set of Boolean variables describing the private state of agent $\ag$. The sets $\Var_i$, where $i \in \{\env\} \cup \Agp$ 
    are pairwise disjoint. We write $\Var= \Var_\env\cup\bigcup_{\ag\in\Agp}\Var_\ag$.
    \item $\OVar_\ag \subseteq\Var_\env$
    is the set of environment variables observable by agent $\ag$.
    \item $\Act_{\ag}$ is the set of possible actions of agent $\ag$. The definitions of $\decision$ and $\Act$ are the same as in Definition~\ref{def:explicitsystem}. We associate every action $\mov\in\Act_\ag$ with a Boolean variable $\mathsf{act}_{\ag,\mov}$, and define $\AVar_\ag=\{\mathsf{act}_{\ag,\mov}\mid \mov \in\Act_\ag\},\qquad\AVar= \bigcup_{\ag\in\Ag}\AVar_\ag$. We assume that the sets $\AVar_\ag$ are pairwise disjoint. 
    \item $\Init_\env\in\Prop(\Var_\env)$  describes the possible initial environment states and
    $\Init_\ag\in\Prop(\Var_\ag)$ describes the possible initial private states of any agent $\ag \in \Agp$.
    \item $\Prot_\ag \in \Prop (\Var_\ag \cup\OVar_\ag\cup\AVar_\ag)$ is the symbolic protocol of any agent $\ag \in \Agp$.

    \item $\Trn_\env
    \in \Prop(\Var_\env\cup\AVar\cup\Var'_\env)$ represents the transition relation of the environment, and $\Trn_\ag \in
    \Prop(\Var_\ag\cup\OVar_\ag \cup\AVar\cup\Var'_\ag)$
    represents the local transition relation of any $\ag \in \Agp$.
\end{itemize}
\end{definition}
We introduce the following notations:
\begin{itemize}
    \item For any $\pos \in \Val(\Var)$, we define its complete Boolean encoding as $\senc{\pos}=\bigwedge\limits_{\substack{v\in\Var \text{ and } \pos(v)=\top}} v \wedge \bigwedge\limits_{\substack{v\in\Var\text{ and } \pos(v)=\bot}}\neg v$. Similarly, $\sencp{t}=\bigwedge\limits_{\substack{v\in\Var \text{ and } t'(v)=\top}} v \wedge \bigwedge\limits_{\substack{v\in\Var \text{ and }  t'(v)=\bot}}\neg v$ is the encoding of any $t \in \Val(\Var)$ over the next-state variables.
    \item For any $\mov \in \Act_\ag$, we denote the Boolean encoding of agent $\ag$ choosing $\mov$ by $\senc{\mov}= \mathsf{act}_{\ag,\mov} \wedge \bigwedge_{b\in\Act_\ag \setminus\{\mov\}} \neg\mathsf{act}_{\ag,b}$. For any $\profile \mov \in \decision$, we denote the Boolean encoding of the choice of joint action $\profile \mov$ by $\senc{\profile \mov}=\bigwedge_{\ag\in\Ag}\widehat{\profile \mov(\ag)}$.
\end{itemize}


The induced system is $\ISys_{\ISyss}=\left\langle \setposp,\initpos,\{\Act_\ag\}_{\ag\in\Ag},\{\protocol_\ag\}_{\ag\in\Ag},\{\observe_\ag\}_{\ag\in\Ag},\trans,\val \right\rangle$, where\/:
\begin{itemize}
    \item $\setposp=\Val(\Var)$. 
    \item $\initpos=\{\pos\in\setposp \mid \pos\models \Init_\env\wedge \bigwedge_{\ag\in\Ag}\Init_\ag\}$.
    \item For every $\ag\in\Ag$ and $\pos \in \setposp$, $\observe_\ag(\pos)= \pos\upharpoonright_{(\Var_\ag\cup\OVar_\ag)}$.
    \item For every $\ag\in\Ag$ and $\pos \in \setposp$, $\protocol_\ag(\pos)=\{ \mov\in\Act_\ag \mid \senc{\pos} \wedge \senc{\mov}_{\ag}\wedge\Prot_\ag \text{ is satisfiable}\}$.
    \item $\trans=\{(\pos,\profile \mov,t) \in \setposp \times \decision \times \setposp \mid \profile \mov \in \protocol(\pos) \text{ and } \senc{\pos} \wedge \senc{\profile \mov} \wedge \senc{t}^{\,\prime} \wedge \Trn_\env \wedge \bigwedge_{\ag\in\Ag}\Trn_\ag \text{ is sat.}\}$. 
    \item $\APf=\Var$ and the valuation is $\val(\pos) = \{v\in\Var\mid \pos(v)=\top\}$.
\end{itemize}
In this paper, we assume that $\ISys_{\ISyss}$ is always well-formed. We define the \emph{size} of any logical formula as the total number of operators. For both the explicit system $\ISys$ and the succinct representation $\ISyss$, we define their \emph{size} as the sum of the standard encoding sizes of each semantic component.

\section{Epistemic Strategy Logic with Partial Binding}

We introduce the logic $\SLKP$, an epistemic strategy logic with partial strategic binding, interpreted under uniform memoryless strategies. The syntax of $\SLKP$ is given as follows:
\begin{align*}
    \varphi &::= p \mid \neg \varphi \mid \varphi_{1} \wedge \varphi_{2} \mid \Estrat[x] \varphi \mid  \bind{\ag}{\var} \varphi \mid \unbind{\ag} \varphi \mid \Ea[\Agpp] \varphi \mid \Ca[\Agpp] \varphi \mid \Da[\Agpp] \varphi \mid \E \psi  \\
    \psi &::= \varphi \mid \neg \psi \mid \psi_{1} \wedge \psi_{2} \mid \Xs \psi \mid \psi_{1} \U \psi_{2}
\end{align*}
where $\ag \in \Agp$, $\emptyset \neq \Agpp \subseteq \Agp$, $\var \in \Vr$ and $p \in \APf$.

We define the standard abbreviations as follows: 
$\bot \DefMath p \wedge \neg p$, ${\top\ \DefMath\ \neg \bot}$, ${\varphi\vee\varphi'\ \DefMath\ \neg (\neg \varphi \wedge \neg \varphi')}$, ${\A \psi\ \DefMath\ \neg\E \neg\psi}$, and ${\Astrat\varphi\ \DefMath\ \neg\Estrat\neg\varphi}$.

For every state or path formula $\varphi \in \SLKP$, we let $\freev(\varphi)$ be the set of variables that appear free in $\varphi$, i.e., that
appear out of the scope of any strategy quantifier that binds them, where the scope of a quantifier is the syntactic subtree rooted in it~\cite{BerthonMMRV21}. A formula $\varphi$ is a sentence if  $\freev(\varphi)$ is empty. In this paper, we make the standard assumption that each strategic variable is quantified at most once in any $\SLKP$ formula. 

Given a strategy variable $\var$ and an $\SLKP$ state or path formula $\alpha$, we let $\shrv(\alpha,\var)$ be the set of agents associated with $\var$ in $\alpha$. 
\begin{definition} \label{def:sharev} The set of sharing players of an \SLKP state or path formula $\alpha$ with respect to a variable $\var$ (written $\shrv(\alpha,\var)$) is defined inductively as follows:
\begin{itemize}
    \item $\shrv(p,\var)=\emptyset$. 
    \item $\shrv(Op~\alpha,\var)=\shrv(\alpha,\var)$, if $Op \in \{\Ea,\Da,\Ca,\Estrat[y],\Xs,\neg,\E\}$.
    \item $\shrv(\alpha_{1}~Op~\alpha_{2})=\shrv(\alpha_{1},\var) \cup \shrv(\alpha_{2},\var)$, if $Op \in \{\wedge,\U\}$. 
    \item $\shrv((\ag,\var) \alpha,\var)=\{\ag\} \cup \shrv(\alpha,\var)$.
    \item $\shrv((\ag,y) \alpha,\var)=\shrv(\alpha,\var)$, if $y \in \{?\} \cup \Vr \setminus \{\var\}$.
\end{itemize}
\end{definition}
The semantics of $\SLKP$ is interpreted over nondeterministic transition systems and memoryless strategies. A memoryless strategy is a function $\strat:\setpos \rightarrow \Act$. We denote by $\setstrat$ the set of all strategies. We say that $\strat$ is $\Agpp$-coherent if for every $\ag \in \Agpp$ and $\pos \in \setpos$, $\strat(\pos) \in \protocol_{\ag}(\pos)$. In addition, $\strat$ is $\Agpp$-uniform if it is $\Agpp$-coherent and for every $\ag \in \Agpp$ and $\pos,\pos' \in \setpos$, $\strat(\pos) = \strat(\pos')$ if $\pos \obsrel[\ag] \pos'$. For a set of agents $\Agpp \subseteq \Agp$, we denote the set of all $\Agpp$-uniform strategies by $\usetstrata{\Agpp}$. 

A \emph{play} is an infinite word $\pi=v_{0}v_{1}\myDots$ such that $v_{i} \in \setpos$ and for every $i \geq 0$, there exists a joint action $\profile \mov \in \protocol(v_{i})$ such that $(v_{i},\profile \mov,v_{i+1})\in\trans$.

An \emph{assignment}  $\assign:\Agp\cup\Vr \rightharpoonup \setstrat$ is a \emph{partial function} from agents and strategy variables to strategies such that $\assign(\ag)$ is $\ag$-coherent for each agent $\ag \in dom(\assign) \cap \Agp$. For an assignment $\assign$, an agent $\ag$, and a strategy $\strat$ for $\ag$, $\assign[\ag\mapsto\strat]$ is the assignment that maps $\ag$ to $\strat$ and is otherwise equal to
$\assign$. $\assign[\var\mapsto \strat]$ is defined similarly, where $\var$ is a strategy variable. 

The outcome of an assignment $\assign$ and a state $\pos \in \setpos$ (written $\out(\assign,\pos)$) is the set of plays $v_{0} v_{1} v_{2}\myDots$ such that $v_{0}=\pos$ and for every $i \geq 0$, there exists some $\profile \mov \in \decision$ satisfying: $\profile \mov  \in \protocol(v_{i})$, $\profile \mov(\ag) = \assign(\ag)(v_{i})$ for every $\ag \in dom(\assign) \cap \Agp$, and $(v_{i},\profile \mov,v_{i+1}) \in \trans$.  

The semantics of $\SLKP$ is defined as follows:
\begin{definition}
\label{def:slkii-semantics} Given an explicit transition system $\ISys$, an $\SLKP$ formula $\phi$, state $\pos \in \setpos$ and an assignment $\assign$ such that $\freev(\varphi) \subseteq dom(\assign)$ and $\assign(\var)$ is $\shrv(\phi,\var)$-coherent for all $\var \in dom(\assign) \cap \Vr$, the satisfaction relation $\ISys,\assign ,\pos \lmodels{\SLKP} \phi$ is inductively defined as follows:
 \begin{itemize}
    \item The Boolean and propositional cases are defined as usual~\cite {vcermak2018practical}.
    \item 
    $\ISys, \assign,  \pos \lmodels{\SLKP}  \Estrat\varphi $ iff $\exists \strat \in \usetstrata{\shrv(\phi,\var)}$ s.t. $
    \ISys, \assign[\var \mapsto \strat], \pos \lmodels{\SLKP} \varphi$.
    \item $\ISys, \assign,  \pos \lmodels{\SLKP}  (\ag,\var)\varphi $ iff  $ \ISys, \assign[\ag \mapsto
      \assign(\var)], \pos \lmodels{\SLKP} \varphi$.
      \item $\ISys, \assign,  \pos \lmodels{\SLKP}  \unbind{\ag}\varphi $ iff  $ \ISys, \assign \upharpoonright (dom(\assign) \setminus \{\ag\}), \pos \lmodels{\SLKP} \varphi$.
       \item $\ISys, \assign, \pos \lmodels{\SLKP} \Ea\phi$ iff for all $\ag\in\Agpp$ and $\pos'\in\setpos$ s.t. $\pos'\obsrel[\ag]\pos$, $\ISys,\assign,\pos'\lmodels{\SLKP}\phi$.
      \item $ \ISys, \assign,  \pos \lmodels{\SLKP}  \Da\phi  
      $ iff for all $\pos' \in \setpos$ s.t. $\pos' \obsreld \pos$, $\ISys, \assign,  \pos' \lmodels{\SLKP} \phi$.
      \item $ \ISys, \assign,  \pos \lmodels{\SLKP}  \Ca\phi  
      $ iff for all $\pos' \in \setpos$ s.t. $\pos' \obsrelc \pos$, $\ISys, \assign,  \pos' \lmodels{\SLKP} \phi$. Recall that $\obsreld = \bigcap_{\ag \in \Agpp} \obsrel$ and $\obsrelc=(\bigcup_{\ag \in \Agpp} \obsrel)^{*}$.
      \item $ \ISys, \assign,  \pos \lmodels{\SLKP}  \E \psi  $ iff there exists $\pi \in \out(\assign,\pos)$ s.t. $\ISys,\assign,\pi,0 \lmodels{\SLKP} \psi$:
      \begin{itemize}
         \item $ \ISys, \assign,\pi,i  \lmodels{\SLKP} \phi  $ iff $\ISys, \assign, \pi_{i} \lmodels{\SLKP} \phi$.
          \item $ \ISys, \assign,\pi,i  \lmodels{\SLKP}  \neg \psi  $ iff $\ISys, \assign, \pi,i \not\lmodels{\SLKP} \psi$. 
          \item $ \ISys, \assign,\pi,i  \lmodels{\SLKP}  \psi_{1} \wedge \psi_{2}  $ iff $\ISys, \assign, \pi,i \lmodels{\SLKP} \psi_{1}$ and $\ISys, \assign, \pi,i \lmodels{\SLKP} \psi_{2}$. 
          \item $ \ISys, \assign,\pi,i  \lmodels{\SLKP}  \Xs \psi  $ iff $\ISys, \assign, \pi,i+1 \lmodels{\SLKP} \psi$. 
           \item  $ \ISys, \assign,  \pi,i \lmodels{\SLKP}  \psi_1\U\psi_2$ iff $\exists j\geq i$ s.t. $\ISys, \assign,\pi,j\lmodels{\SLKP} \psi_2 $ and $\forall k.$ s.t. $ i\leq k <j$, $\ISys,  \assign,\pi,k \lmodels{\SLKP} \psi_1$. 
      \end{itemize}
      \end{itemize}
\end{definition}

Note that, unlike \SLK~\cite{vcermak2018practical} but similar to \BSL~\cite{knight2019dealing}, the satisfaction of an $\SLKP$ sentence may depend on the current assignment. We follow the standard convention of evaluating top-level sentences from the empty assignment. For a transition system $\ISys$, a state $\pos$, and an $\SLKP$ sentence $\varphi$, we write $\ISys ,\pos\lmodels{\SLKP} \varphi$ if $\ISys,\emptyset,\pos \lmodels{\SLKP} \varphi$. We further denote $\ISys \lmodels{\SLKP} \varphi$ if $\ISys ,\pos\lmodels{\SLKP} \varphi$ for every $\pos \in \initpos$. The explicit and succinct $\SLKP$ model-checking problems are defined as follows.
\begin{definition}
    The explicit $\SLKP$ model checking problem (written $\MC{\SLKP}$) takes a transition system $\ISys$ and an $\SLKP$ sentence $\phi$ as input and decides whether $\ISys \lmodels{\SLKP} \phi$. The succinct $\SLKP$ model checking problem (written $\mc{\SLKP}$) takes a succinct transition system $\ISyss$ and an $\SLKP$ sentence $\phi$ as input and decides whether $\ISys_{\ISyss} \lmodels{\SLKP} \phi$.
\end{definition}

\section[SLK Expressiveness]{Syntactic Fragments of \SLKP}
In this section, we identify some other popular logics for reasoning about strategic abilities of multi-agent systems as syntactic fragments of \SLKP. This lets the complexity results stated in Section~\ref{sec:complexityresults} be presented in a unified framework while retaining the conventional notation of each logic. Concretely, we consider the logic \SLK~\cite{vcermak2018practical} and $\ATLS$, with $\ATL$ and $\CL$ as fragments of the latter~\cite{jamroga2008model}. 

\paragraph{\SLK.} We identify \SLK with the restriction of \SLKP in which\/: (i) the transition system is deterministic, (ii) the unbinding operator is unavailable, and (iii) every agent is bound to a uniform strategy whenever a temporal subformula is evaluated~\cite{vcermak2018practical}. Under these restrictions, a complete assignment induces a unique outcome, so the explicit path quantifier of \SLKP is redundant. 
Accordingly, the conventional \SLK temporal operators $\Xs\varphi$ and $\varphi_1\U\varphi_2$ are read as the \SLKP formulas $\E\Xs\varphi$ and $\E(\varphi_1\U\varphi_2)$, respectively. With $\Ka[\ag]\varphi$ understood as the abbreviation $\Ea[\{\ag\}]\varphi$, the conventional surface syntax of \SLK is
\begin{align*}
    \varphi ::= {}&
    p \mid \neg \varphi \mid \varphi_{1} \wedge \varphi_{2}
    \mid \Estrat[x] \varphi
    \mid \bind{\ag}{\var} \varphi
    \mid \Ka[\ag]\varphi
    \mid \Ca[\Agpp] \varphi
    \mid \Da[\Agpp] \varphi
    \mid \Xs \varphi
    \mid \varphi_{1} \U \varphi_{2}.
\end{align*}


\paragraph{\ATLS.} We next present $\ATLS$ under the objective and subjective interpretations used in this paper. Its conventional syntax is
\begin{align*}
    \varphi &::= p \mid \neg \varphi \mid \varphi_{1} \wedge \varphi_{2} \mid \Cstrato \psi \mid \Cstrats \psi, \\
    \psi &::= \varphi \mid \neg \psi \mid \psi_{1} \wedge \psi_{2} \mid \Xs \psi \mid \psi_{1} \U \psi_{2}.
\end{align*}

We denote a strategy profile for a coalition $\Agpp \subseteq \Agp$ as a tuple
$\profile \strat_{\Agpp}=(\strat_\ag)_{\ag\in \Agpp}$, where each $\strat_\ag$ is an $\ag$-uniform strategy. Let $ \assign_{\profile \strat_{\Agpp}}$ denote
the assignment with domain $\Agpp$ such that $\assign_{\profile \strat_{\Agpp}}(\ag)=\profile \strat_\ag$ for every $\ag\in \Agpp$. Unlike $\SLKP$, $\ATLS$ has no explicit strategy variables or bindings, and hence its satisfaction relation does not require an assignment: we write
$\ISys,\pos\lmodels{\ATLS}\varphi$ for state formulas and
$\ISys,\pos\lmodels{\ATLS}\psi$ for path formulas. We define $\operatorname{Alt}_{\Agpp}(\pos) \DefMath \bigcup_{\ag\in\Agpp}[\pos]_{\obsrel[\ag]}$. The coalition modalities under the objective and subjective semantics are then defined as follows:
$$\ISys,\pos \lmodels{\ATLS}
    \Cstrato[\Agpp] \psi \text{ iff } \exists \profile \strat_{\Agpp}\, \forall \pi \in \out(\assign_{\profile \strat_{\Agpp}},\pos)\text{: }\ISys,\pi,0 \lmodels{\ATLS} \psi.$$
$$\ISys,\pos\lmodels{\ATLS}\Cstrats[\Agpp]\psi \text{ iff }
\exists\profile\strat_{\Agpp}\;
\forall\pos'\in\operatorname{Alt}_{\Agpp}(\pos)\;
\forall\pi\in\out(\assign_{\profile\strat_{\Agpp}},\pos')\text{: }
\ISys,\pi,0\lmodels{\ATLS}\psi.$$

The standard perfect-recall embedding of $\ATLS$ into Strategy Logic does not directly preserve the intended memoryless semantics: in $\ATLS$, only agents in $\Agpp$ are required to follow uniform memoryless strategies, whereas a complete \SLK assignment also restricts agents outside the coalition. Partial binding removes this discrepancy: only members of $\Agpp$ need be bound, while the universal path quantifier ranges over all outcomes compatible with those bindings. Consequently, if $\Ag=\{b_{1},\myDots,b_{m}\}$ and $A=\{\ag_{1},\myDots,\ag_{k}\}$, the two \ATLS coalition modalities can be represented by the following derived \SLKP constructs:
\begin{align}
\Cstrato[\Agpp]
&\DefMath
\unbind{b_{1}} \myDots \unbind{b_{m}}
\Estrat[\var_1]\myDots\Estrat[\var_k]
\bind{\ag_1}{\var_1}\myDots\bind{\ag_k}{\var_k}
\A,
\\
\Cstrats[\Agpp]
&\DefMath
\unbind{b_{1}} \myDots \unbind{b_{m}}
\Estrat[\var_1]\myDots\Estrat[\var_k]
\bind{\ag_1}{\var_1}\myDots\bind{\ag_k}{\var_k}
\Ea[\Agpp]\A.
\end{align}

The logic $\ATL$ is the fragment of $\ATLS$ in which path objectives are restricted to $\Xs\varphi$ and $\varphi_1\U\varphi_2$, while $\CL$ is the one-step fragment in which coalition objectives are restricted to $\Xs\varphi$~\cite{bulling2010model}. For $\eta\in\{o,s\}$, let $\ATLSeta{\eta}$ denote the pure fragment of $\ATLS$ in which only $\Cstrateta{\eta}$ occurs; we use $\ATLeta{\eta}$ and $\CLeta{\eta}$ analogously. The corresponding explicit and succinct model-checking problems are defined as in Definition~5 by restricting the input formula to the chosen fragment. We write, for example, $\MC{\ATLSeta{\eta}}$ and $\mc{\ATLSeta{\eta}}$. 



\section[Complexity of SLKP]{Complexity of Succinct Model Checking}
\label{sec:complexityresults}
In this section, we discuss new complexity results for model checking strategic abilities of succinct transition systems of \SLKP and its fragments \SLK, \ATLS, \ATL, and \CL. Before doing so, we review a fragment of Second-Order Quantified Boolean Formulas ($\SOQBF$), called prenex simple $\SOQBF$, which we will use to establish lower bounds for our model-checking results. The fragment is given in conjunctive normal form (CNF) ($\SOQBFC$)~\cite{jiang2023second,malizia2024hausdorff,
luck2016complete,hannula2016quantified}.
\subsection{Simple Second-Order Quantified Boolean Formulas}



An $\SOQBFC$ over a set of free propositional variables
$\overline{X}=\{x_{1},\myDots,x_{w}\}$ is a formula of the form
\begin{equation}
\label{eq:1}
    \Psi(\overline{X})
    =
    Q_{1}\funct{1}\,
    Q_{2}\funct{2}\,
    \myDots\,
    Q_{m}\funct{m}\,
    \forall y_{1},\myDots,y_{n}\;
    \Phi
\end{equation}
where for every $1\leq i\leq m$, $Q_i\in\{\exists,\forall\}$, and
$\funct{i}$ is a Boolean function variable of arity $|\funct{i}|$.
The matrix $\Phi$ is a quantifier-free Boolean formula in CNF over the propositional variables
$x_{1},\myDots,x_{w},y_{1},\myDots,y_{n}$ and applications of the
function variables $\funct{1},\myDots,\funct{m}$. Each argument of a function application is taken from $ Y\cup\{\top,\bot\}$, where $Y=\{y_{1},\myDots,y_{n}\}$. We denote
$\overline{F}=\{\funct{1},\myDots,\funct{m}\}$. Thus, the variables in $\overline{X}$ are the only free variables of
$\Psi(\overline{X})$; all function variables in $\overline{F}$ and all propositional variables in $Y$ are quantified.

Suppose that $\Lambda$ is a partial interpretation of the function variables in $\overline{F}$, mapping each $\funct{i}$ in its domain to a Boolean function $\Lambda(\funct{i}):\mathbb{B}^{|\funct{i}|}\rightarrow\mathbb{B}$, and $\tau:(\overline{X}\cup Y)\rightharpoonup\mathbb{B}$
is a partial assignment of propositional variables. For convenience, we further define $\tau(\top)=\top$ and $\tau(\bot)=\bot$. 

The satisfaction relation $\Lambda,\tau\lmodels{so}\Psi$ follows the usual Boolean semantics for evaluating the matrix $\Phi$ given the assignment of propositional variables and function applications, in particular
\begin{itemize}
    \item $\Lambda,\tau\lmodels{so} \funct{i}(c_1,\myDots,c_{|\funct{i}|})$ iff $\Lambda(\funct{i}) \bigl(\tau(c_1),\myDots,\tau(c_{|\funct{i}|})\bigr)=\top$.
\end{itemize}
and is extended to the quantifier prefix by
\begin{itemize}
    \item
    $\Lambda,\tau \lmodels{so}
    \exists \funct{k}\,\Psi'$
    iff
    $\exists f:\mathbb{B}^{|\funct{k}|}\rightarrow\mathbb{B}$, $\Lambda[\funct{k}\mapsto f],\tau
        \lmodels{so}
        \Psi'$.
    \item $\Lambda,\tau \lmodels{so}
    \forall \funct{k}\,\Psi'$
    iff
    $\forall f:\mathbb{B}^{|\funct{k}|}\rightarrow\mathbb{B}$, $\Lambda[\funct{k}\mapsto f],\tau
        \lmodels{so}
        \Psi'$.
    \item $\Lambda,\tau \lmodels{so}
\forall Y\,\Phi$,  iff $\forall\, \tau':Y\rightarrow\mathbb{B}$,  $\Lambda,\tau\cup\tau'\lmodels{so}\Phi$ .

\end{itemize}

An $\SOQBFC$ formula $\Psi(\overline{X})$ is \emph{satisfiable} if there exists an assignment
$\tau_X:\overline{X}\rightarrow\mathbb{B}$ such that
$\emptyset,\tau_X\lmodels{so}\Psi(\overline{X})$.
It is a \emph{sentence} if $\overline{X}=\emptyset$. 

We call an $\SOQBFC$ formula an $\SOQBFO$ formula if $Q_{1}=\myDots=Q_{m}=\exists$. We call $\Psi$ a \emph{fixed-argument} $\SOQBFC$ formula, denoted by $\SOQBFU$, if for every function variable $\funct{i}$, there is a fixed ordered tuple $D_i=(y_{i}^{1},\myDots,y_{i}^{|\funct{i}|})$ of variables from $Y$ such that every occurrence of $\funct{i}$ in
$\Phi$ is exactly $\funct{i}(D_i)$
~\cite{jiang2023second,luck2016complete,hannula2016quantified}. For example, if $D_i=(y_1,y_3)$, then $\funct{i}(y_1,y_3)$ and $
\neg\funct{i}(y_1,y_3)$ are permitted occurrences, whereas $\funct{i}(y_3,y_1)$, $\funct{i}(y_1,y_2)$, and $\funct{i}(\top,y_3)$ are not. 

\subsection{Complexity Results}

\begin{theorem}
\label{th:slkp-succinct}
    $\mc{\SLKP}$ is $\aexp(poly)$-complete.
\end{theorem}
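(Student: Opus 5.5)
We prove the two bounds separately: membership by an alternating exponential-time evaluation of the formula, and hardness by reducing the truth problem for $\SOQBFU$ sentences with alternating function quantifiers to $\mc{\SLKP}$.

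\emph{Upper bound.} We design an alternating algorithm that runs in exponential time with polynomially many alternations. It evaluates $\varphi$ over $\ISys_{\ISyss}$ recursively on the syntax tree, without constructing the exponential explicit system up front.

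A state is a valuation in $\Val(\Var)$, which has polynomial size. A uniform memoryless strategy for a variable $\var$ is a table from observation classes to actions; since it must be uniform for every agent in $\shrv(\varphi,\var)$, it can be indexed by the observations of any one such agent. The table has exponential size, so it can be guessed in exponential time.

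Each strategy quantifier $\Estrat$ or $\Astrat$ becomes an existential or universal block in which the whole table is written down. Coherence and uniformity are then checked deterministically in exponential time. This uses only satisfiability of $\senc{\pos}\wedge\senc{\mov}\wedge\Prot_\ag$, which is a polynomial-time evaluation because $\senc{\pos}$ fixes all state variables.

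The remaining operators are handled without adding alternations beyond the formula:
\begin{itemize}
\item Bindings $\bind{\ag}{\var}$ and unbindings $\unbind{\ag}$ only update the assignment.
\item Epistemic operators $\Ea$, $\Da$ and $\Ca$ quantify over exponentially many states. For $\Ca$ we also need reachability in the union of the indistinguishability relations, which takes exponential time.
\item For $\E\psi$ with the assignment fixed, we form the product of the exponential-size restricted transition system with a Büchi automaton for $\psi$. Its maximal state subformulas are replaced by their already computed truth values. Nonemptiness is then decided in exponential time.
\end{itemize}

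The subtle point is that inner state subformulas must be evaluated at exponentially many states, and this cannot be done by alternating afresh at each one. So we proceed bottom-up. We note that $\aexp(poly)$ is closed under exponential-time Turing reductions to oracles that use fewer alternations, equivalently that it equals $\mathsf{EXP}$ with a bounded tower of $\Sigma$-level oracles. We then compute the truth table of each subformula for the current assignment by calling the inner level. Alternatively, we guess the exponential-size truth tables of all subformulas existentially and verify them universally; this multiplies the number of alternations by at most a constant per nesting level.

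Making this bookkeeping precise is the main technical obstacle of the upper bound. The key requirement is that the number of alternations stays polynomial in $|\varphi|$.

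\emph{Lower bound.} We use the known $\aexp(poly)$-completeness of $\SOQBFU$ sentences with alternating quantifiers (cited above; see Lück and Hannula et al.).

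Given $\Psi = Q_1\funct{1}\dots Q_m\funct{m}\,\forall Y\,\Phi$, we build a polynomial-size succinct system with the following components:
\begin{itemize}
\item Environment variables $Y$, together with a phase or clause-index counter.
\item One agent $a_i$ per function variable $\funct{i}$, with actions $\{\top,\bot\}$. Its observable variables are $\OVar_{a_i}=D_i$, so a uniform memoryless strategy of $a_i$ is exactly a Boolean function of $D_i$. The fixed-argument restriction is essential here.
\item An initial formula that makes all valuations of $Y$ possible initial states, with one step recording each $a_i$'s choice into a private variable $f_i$.
\end{itemize}
The specification is
\[Q_1\var_1\dots Q_m\var_m\,\bind{a_1}{\var_1}\dots\bind{a_m}{\var_m}\,\A\Xs\,\widehat\Phi,\]
where $\widehat\Phi$ is $\Phi$ with $\funct{i}(D_i)$ replaced by $f_i$. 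Universality over $Y$ comes for free, since model checking requires every initial state to satisfy the formula.

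The only delicate check is the order of quantification: a strategy is a single global object, chosen before $Y$ is fixed, so function quantifiers correctly precede $\forall Y$. This argument uses no epistemic operators and no $\U$, and it yields hardness for \SLK too if the system is made deterministic.
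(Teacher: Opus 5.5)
\paragraph{Upper bound.} Your guess-and-verify variant is the paper's procedure. For each $\E\psi$, the paper existentially guesses a table $f:\setpos\times\Phi_\psi\to\{0,1\}$ for the maximal state subformulas and universally selects an entry to verify recursively. It carries a target bit $\beta$ so that entries equal to $0$ are checked by the dual procedure. This costs constantly many alternations per nesting level. Drop your other variant: a bounded tower of oracles yields only constantly many alternations, so that characterisation of $\aexp(poly)$ is not correct, and it is not needed.

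\paragraph{Lower bound: the quantifier order is reversed.} The gap is exactly at the step you flag as delicate. By definition, $\ISys\lmodels{\SLKP}\varphi$ holds iff $\ISys,\emptyset,\pos\lmodels{\SLKP}\varphi$ for every $\pos\in\initpos$. So the strategy quantifiers are evaluated separately at each initial state. With the valuations of $Y$ as initial states, your specification expresses $\forall Y\,Q_1\funct{1}\myDots Q_m\funct{m}\,\Phi$, not $Q_1\funct{1}\myDots Q_m\funct{m}\,\forall Y\,\Phi$. A strategy is indeed a global object, but the witness for $\Estrat[\var_i]$ may be a different global object at each initial state.

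Since your objective looks only one step ahead, only the value of each chosen function at $\nu\upharpoonright D_i$ matters. The instance therefore degenerates to quantification over single bits chosen after $\nu$. For example, $\exists\funct{1}\,\forall y_1\forall y_2\,\bigl((\neg\funct{1}(y_1)\vee y_2)\wedge(\funct{1}(y_1)\vee\neg y_2)\bigr)$ is false. Yet at each initial state $(y_1,y_2)$, agent $a_1$ has a uniform strategy that outputs the current value of $y_2$ on the current observation of $y_1$. So your model-checking instance is positive.

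\paragraph{How the paper avoids this.} The universal choice of $Y$ must occur inside the strategy prefix, after $Q_1\var_1\myDots Q_m\var_m$, starting from a single initial state. The paper adds agents $y_1,\myDots,y_n$ whose strategies are universally quantified after the function quantifiers. It then uses three phases:
\begin{enumerate}
\item the $y_\ell$-agents write their values into the state;
\item each $\funct{i}$-agent, observing only the phase and the variables of $D_i$, outputs a bit;
\item $sat$ records the value of $\Phi$.
\end{enumerate}
The phase separation is needed so that the function agents see $Y$ before acting. The specification is $Q'_1\myDots Q'_m\Astrat[x'_1]\myDots\Astrat[x'_n]$, followed by the bindings and $\E\Xs\Xs\,sat$. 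This system is deterministic and every agent is bound, so the same reduction also gives hardness for \SLK. Alternatively, you could leave the $Y$-choosing agents unbound and use $\A$ from a single initial state. That would suffice for \SLKP, but not for \SLK.
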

\begin{proof}
    \textbf{(Membership)} Note that given $\ISyss$, we can construct $\ISys_{\ISyss}$ explicitly in exponential time, since $|\setpos|$, $|\decision|$ are all exponentially bounded by $|\ISyss|$, and we can evaluate whether $(\pos,\profile \mov,\pos') \in \trans$ in polynomial time. 
    We introduce the following recursive procedure $\text{Ver}(\ISys,\assign,\pos,\varphi,\bt)$, which accepts iff: $\ISys,\assign,\pos \lmodels{\SLKP} \varphi$ and $\bt=1$; \textbf{or} $\ISys,\assign,\pos \not\lmodels{\SLKP} \varphi$ and $\bt=0$. 

    \begin{itemize}
        \item $\text{Ver}(\ISys,\assign,\pos,p,\bt)$ accepts iff $p \in \val(\pos)$ and $\bt=1$; \textbf{or} $p \notin \val(\pos)$ and $\bt=0$.
        \item $\text{Ver}(\ISys,\assign,\pos,\neg \varphi,\bt)$ accepts iff $\text{Ver}(\ISys,\assign,\pos,\varphi,1-\bt)$ accepts.
        \item $\text{Ver}(\ISys,\assign,\pos,\varphi_{1} \wedge \varphi_{2},1)$ accepts iff, for a universally chosen $i \in \{1,2\}$, $\text{Ver}(\ISys,\assign,\pos,\varphi_{i},1)$ accepts. $\text{Ver}(\ISys,\assign,\pos,\varphi_{1} \wedge \varphi_{2},0)$ is dual and analogous.
        \item $\text{Ver}(\ISys,\assign,\pos, \Estrat \varphi,1)$ accepts iff we can existentially guess $\strat \in \usetstrata{\shrv(\varphi,\var)}$ and $\text{Ver}(\ISys,\assign[ \var \mapsto \strat],\pos,\varphi,1)$ accepts. $\text{Ver}(\ISys,\assign,\pos, \Estrat \varphi,0)$ accepts iff, for any universal choice of $\strat \in \setstrat$, if $\strat \in \usetstrata{\shrv(\varphi,\var)}$ then $\text{Ver}(\ISys,\assign[ \var \mapsto \strat],\pos,\varphi,0)$ accepts.
        \item $\text{Ver}(\ISys,\assign,\pos,\bind{\ag}{\var} \varphi,\bt)$ accepts iff $\text{Ver}(\ISys,\assign[\ag \mapsto \assign(\var)],\pos,\varphi,\bt)$ accepts.
        \item $\text{Ver}(\ISys,\assign,\pos,\unbind{\ag} \varphi,\bt)$ iff $\text{Ver}(\ISys,\assign \upharpoonright (dom(\assign) \setminus \{\ag\}),\pos,\varphi,\bt)$ accepts.
        \item $\text{Ver}(\ISys,\assign,\pos,\Ea \varphi,1)$ accepts iff, for a universally chosen $\pos' \in \setpos$ and $a \in \Agpp$, if $\pos \obsrel \pos'$ then $\text{Ver}(\ISys,\assign,\pos',\varphi,1)$ accepts. $\text{Ver}(\ISys,\assign,\pos,\Ea \varphi,0)$ is dual; the checks for $\Da \varphi$ and $\Ca \varphi$ are analogous to that for $\Ea \varphi$.
        \item $\text{Ver}(\ISys,\assign,\pos,\E\psi,\beta)$: Let $\Phi_\psi$ be the set of maximal state subformula occurrences of $\psi$, i.e., those encountered when traversing the syntax tree of $\psi$ and stopping whenever a state formula is reached. Existentially guess a mapping $f:S\times\Phi_\psi\rightarrow\{0,1\}$, intended to satisfy $f(\pos',\theta)=1$ iff $\ISys,\assign,\pos'\lmodels{\SLKP}\theta$. For every $\theta\in\Phi_\psi$, introduce a fresh proposition $p_\theta$, and let $\widehat{\psi}$ be obtained from $\psi$ by replacing each such occurrence of $\theta$ by $p_\theta$. Construct the transition system $\ISys^{\assign}$ by removing every $\act \in \protocol_\ag(\pos')$ for $\ag \in dom(\assign)$, whenever $\assign(\ag)(\pos') \neq \act$ ($\trans$ is restricted accordingly), and by setting $p_\theta\in \val(\pos')\text{ iff }f(\pos',\theta)=1$. Then $\text{Ver}(\mathcal I,\mathcal X,s,E\psi,\beta)$ accepts iff, for a universally chosen $(\pos',\theta)\in S\times\Phi_\psi$,$\text{Ver}(\mathcal I,\mathcal X,s',\theta,f(s',\theta))$ accepts, and
    $\bt=1 \text{ iff } \ISys^{\assign},\pos\lmodels{\SLKP} \E\widehat{\psi}$. The latter is standard existential \LTL model checking and is decidable deterministically in
$O(|\ISys^{\assign}| \cdot 2^{O(|\widehat{\psi}|)})$ time~\cite{vardi1986automata}.

    \end{itemize}
    Thus, $\ISys_{\ISyss} \lmodels{\SLKP} \varphi$ iff for a universally chosen $\pos \in \initpos$, $\text{Ver}(\ISys,\emptyset,\pos,\varphi,1)$ accepts.

    The correctness of the algorithm follows by structural induction. For the complexity, since $\ISys_{\ISyss}$, every memoryless strategy, and every mapping $f$ above have at most exponential size in $|\ISyss|+|\varphi|$, and existential \LTL model checking is exponential in this input size, every recursive level runs in exponential time. Moreover, each recursive call is made on a strict state subformula, and each level introduces only a constant number of alternations. Hence every computation has depth $O(|\varphi|)$ and polynomially many alternations. Thus, $\mc{\SLKP}$ is in $\aexp(poly)$.
    
    \noindent\textbf{(Hardness)} We reduce the known $\aexp(poly)$-complete problem of deciding the satisfiability of $\SOQBFU$ \emph{sentences} of the form~\eqref{eq:1}~\cite{jiang2023second,hannula2016quantified} to $\mc{\SLKP}$. W.l.o.g., we assume that $\Phi=C_{1} \wedge \myDots \wedge C_{k}$, where each $C_{i}$ is a disjunction over literals of $y_{1},\myDots,y_{n},\funct{1},\myDots,\funct{m}$. The succinct representation $\ISyss$ consists of the following components:
    \begin{itemize}
        \item $\Ag=\{\funct{1},\myDots,\funct{m},y_{1},\myDots,y_{n}\}.$
        \item $\Vr_{e}=\{r_{0},r_{1},r_{2}\} \cup \{f_{i}^{+},f_{i}^{-} \mid 1 \leq i \leq m\} \cup \{y_{i}^{+},y_{i}^{-} \mid 1 \leq i \leq n\} \cup \{sat\}.$
        \item $\Init_{e}=r_{0} \wedge \neg r_{1} \wedge \neg r_{2} \wedge \bigwedge\limits_{i=1}^{m} (\neg f_{i}^{+} \wedge \neg f_{i}^{-}) \wedge \bigwedge\limits_{i=1}^{n} (\neg y_{i}^{+} \wedge \neg y_{i}^{-}) \wedge \neg sat$
        \item $\Trn_{e}=\Trn_{e}^{1} \wedge \Trn_{e}^{2}$, where\/: 
        \begin{itemize}
            \item $\Trn_{e}^{1}=\neg r_{0}' \wedge (r_{0} \leftrightarrow r_{1}') \wedge (\neg r_{0} \leftrightarrow r_{2}') \wedge (sat' \leftrightarrow \bigwedge\limits_{i=1}^{k} \bigvee\limits_{j=1}^{|C_{i}|} \mu(C_{i}^{j}))$, where $\mu(C_{i}^{j})=\begin{cases}
                (f_{w}^{+})' & \text{ if } C_{i}^{j}=\funct{w} \text{ for some } w \\
                (f_{w}^{-})' & \text{ if } C_{i}^{j}=\neg \funct{w} \text{ for some } w \\
                (y_{w}^{+})' & \text{ if } C_{i}^{j}=y_{w} \text{ for some } w \\
                (y_{w}^{-})' & \text{ if } C_{i}^{j}=\neg y_{w} \text{ for some } w \\
            \end{cases}$
            \item $\Trn_{e}^{2}=\bigwedge\limits_{\ag \in \Ag} (((\mathsf{act}_{\ag,l} \vee \ag^{+}) \leftrightarrow (\ag^{+})') \wedge ((\mathsf{act}_{\ag,r} \vee \ag^{-}) \leftrightarrow (\ag^{-})'))$, where $\ag^{\pm}$ abbreviates $y_{i}^{\pm}$ if $\ag=y_{i}$, and $f_{i}^{\pm}$ if $a=\funct{i}$. 
        \end{itemize}
        
    \end{itemize}
    For each $y_{i} \in \Ag$\/:
    \begin{itemize}
            \item $\Vr_{y_{i}}=\emptyset$, and $\OVar_{y_{i}}=\{r_{0},r_{1},r_{2}\}$.
            \item $\Act_{y_{i}}=\{l,r,no\}$, and $\Init_{y_{i}}=\top$.
            \item $\Prot_{y_{i}}=(\mathsf{act}_{y_{i},l} \rightarrow  r_{0}) \wedge (\mathsf{act}_{y_{i},r} \rightarrow r_{0}) \wedge (\mathsf{act}_{y_{i},no} \rightarrow \neg r_{0})$, and $\Trn_{y_{i}}=\top$.
        \end{itemize}
    For each $\funct{i} \in \Ag$ s.t. the fixed ordered tuple of $\funct{i}$ in $\Psi$ is $D_{i}=(y_{i}^{1},\myDots,y_{i}^{|\funct{i}|})$:
        \begin{itemize}
            \item $\Vr_{\funct{i}}=\emptyset$, and $\OVar_{\funct{i}}=\{r_{0},r_{1},r_{2}\} \cup \{(y_{i}^{j})^{+},(y_{i}^{j})^{-} \mid y_{i}^{j} \in D_{i}\}$.
            \item $\Act_{\funct{i}}=\{l,r,no\}$, and $\Init_{\funct{i}}=\top$.
            \item $\Prot_{\funct{i}}=(\mathsf{act}_{\funct{i},l} \rightarrow  r_{1}) \wedge (\mathsf{act}_{\funct{i},r} \rightarrow  r_{1}) \wedge (\mathsf{act}_{\funct{i},no} \rightarrow \neg r_{1})$, and $\Trn_{\funct{i}}=\top$.
        \end{itemize}
        We define the $\SLKP$ sentence: $$\widehat{\phi}=Q'_{1} \myDots Q'_{m} \Astrat[x_{1}'] \myDots \Astrat[x_{n}'] \bind{\funct{1}}{x_{1}} \myDots \bind{\funct{m}}{x_{m}} \bind{y_{1}}{x_{1}'} \myDots \bind{y_{n}}{x_{n}'}   \E \Xs \Xs\, sat$$ where $Q'_{i}=\Estrat[x_{i}]$ if $Q_{i}=\exists$, and $Q'_{i}=\Astrat[x_{i}]$ if $Q_{i}=\forall$. It is clear that both the size of $\widehat{\phi}$ and $\ISyss$ are polynomial in the size of $\Psi$. 

        Intuitively, each function variable $\funct{i}$ and propositional variable $y_{i}$ is represented by an agent. The strategy-quantifier prefix of $\widehat\varphi$ mirrors the quantifier prefix of $\Psi$: the $\funct{i}$-agents first fix their uniform memoryless strategies existentially or universally according to $Q_i$, and only afterwards the $y_{i}$-agents universally fix their strategies, representing an assignment to propositional variables in $Y$. The resulting system then evaluates these fixed strategies in two phases. In phase $r_{0}$, each $y_i$-agent plays $l$ or $r$, thereby revealing its chosen truth value. Since each $\funct{i}$ has a fixed argument set, in phase $r_1$, each $\funct{i}$-agent observes exactly the values of the variables in $D_{i}$ and plays $l$ or $r$ according to its already-fixed strategy. Hence, a uniform memoryless strategy of $\funct{i}$ represents a Boolean function $\mathbb{B}^{|D_{i}|}\rightarrow\mathbb {B}$: it is fixed before the universal assignment to $Y$, while its action at $r_{1}$ depends only on the observed valuation of $D_i$. The proposition $sat$ holds in phase $r_2$ exactly when $\Phi$ is true under the corresponding functions and assignment. Therefore, $\ISys_{\ISyss} \lmodels{\SLKP} \widehat{\varphi}$ iff $\Psi$ is satisfiable.  
\end{proof}

Although the paper focuses on model checking succinct systems, the following new result for explicit systems follows directly from the proof of Theorem~\ref{th:slkp-succinct}.  
\begin{theorem}
\label{theorem:slkpmce}
    $\MC{\SLKP}$ is $\pspace$-complete.
\end{theorem}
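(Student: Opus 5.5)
For hardness, the natural route is to inherit the known \pspace-hardness of \SLK model checking over explicit systems~\cite{vcermak2018practical}. \SLK is a syntactic fragment of \SLKP: deterministic systems, no unbinding, and full binding before temporal operators, with $\Xs\varphi$ and $\varphi_1\U\varphi_2$ read as $\E\Xs\varphi$ and $\E(\varphi_1\U\varphi_2)$. This translation is linear and preserves the semantics on deterministic systems. A direct alternative is to rerun the hardness construction of Theorem~\ref{th:slkp-succinct} starting from QBF. Each $y_i$ becomes a propositional (nullary) variable quantified by a strategy quantifier, and the system is written out explicitly. It then has polynomially many states, since only the phases $r_0,r_1,r_2$ and the polynomially many relevant bit-combinations are needed; one uses one agent per variable rather than function agents.

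For membership, I would reuse the procedure $\text{Ver}$ from the membership part of Theorem~\ref{th:slkp-succinct}, but run it on the explicit system $\ISys$ given as input. The key observation is that every object $\text{Ver}$ guesses is now of polynomial size. A memoryless strategy $\strat:\setpos\to\Act$ has size $O(|\setpos|\log|\Act|)$. The labelling guess $f:\setpos\times\Phi_\psi\to\{0,1\}$ has size $|\setpos|\cdot|\psi|$. An assignment holds at most $|\varphi|$ strategy variables plus $|\Agp|$ agents, so it is also polynomial. Uniformity and coherence checks are polynomial-time scans. Successors for $\Ea,\Da,\Ca$ are polynomially computable; for $\Ca$ this is graph reachability over $\bigcup_{\ag}\obsrel$. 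The construction of $\ISys^{\assign}$ is polynomial.

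The main obstacle is the base step, existential \LTL model checking of $\E\widehat\psi$ on $\ISys^{\assign}$, which the succinct proof solved deterministically in time $O(|\ISys^{\assign}|\cdot 2^{O(|\widehat\psi|)})$, exponential in the formula. Here I would instead use the classical result that \LTL model checking is in \pspace~\cite{vardi1986automata}, via on-the-fly nondeterministic emptiness checking of the product with the Büchi automaton for $\widehat\psi$. The whole procedure is then an alternating algorithm whose recursion depth is $O(|\varphi|)$. Each level does polynomial-time work apart from one \pspace oracle-style subroutine, and that subroutine can be executed inline in polynomial space.

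To conclude, I would argue that the alternating algorithm runs in polynomial time relative to polynomial-space leaves, and so lies in $\mathrm{APTIME}^{\pspace}=\pspace$. More concretely, one can determinize it into a depth-first evaluation. At each recursion level it enumerates all candidate strategies or maps $f$ one at a time, reusing polynomial space. With depth $O(|\varphi|)$ and polynomial space per level, the total space is polynomial. This gives $\MC{\SLKP}\in\pspace$ and, with the hardness above, \pspace-completeness.
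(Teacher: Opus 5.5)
Your proof takes the same route as the paper. Hardness is inherited from the PSPACE-hardness of explicit \SLK model checking via the fragment embedding. Membership reruns the procedure Ver on the explicit system, using that every guessed strategy and labelling map $f$ is polynomial and that the existential \LTL check is in PSPACE. The paper absorbs that check via $\mathrm{AP}=\mathrm{PSPACE}$, which amounts to the same thing as your $\mathrm{APTIME}^{\mathrm{PSPACE}}=\mathrm{PSPACE}$ argument or your depth-first determinization.

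Drop your optional QBF alternative for hardness. Rerunning the gadget of Theorem~\ref{th:slkp-succinct} does not yield a polynomial-size explicit system: the phase-$r_1$ and $r_2$ states record the entire bit vector chosen by the agents, so exponentially many valuations are reachable regardless of strategies. A genuinely different gadget would be needed, but the \SLK route already makes it unnecessary.
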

\begin{proof}
The lower bound follows immediately from the $\pspace$-completeness of $\MC{\SLK}$, since $\SLKP$ subsumes $\SLK$~\cite{vcermak2018practical}. For the upper bound, we use the same recursive procedure as in the proof of Theorem~\ref{th:slkp-succinct}. For an explicit system, every memoryless strategy and every mapping $f$ introduced in the $\E$-case have size polynomial in $|\ISys|+|\varphi|$. Moreover, the final existential
    \LTL model-checking problem in the $E$-case is in $\pspace$~\cite{vardi1986automata}, and hence can be solved in alternating polynomial time since
    $\ap=\pspace$~\cite{chandra1981alternation}. Consequently, the resulting procedure runs in alternating polynomial time, and therefore $\MC{\SLKP}\in \pspace$.
\end{proof}
\begin{theorem}
\label{theorem:slkmcs}
    $\mc{\SLK}$ is $\aexp(poly)$-complete.
\end{theorem}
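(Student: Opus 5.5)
Membership follows directly from Theorem~\ref{th:slkp-succinct}. By the identification in Section~4, every \SLK instance translates in polynomial time into an \SLKP instance over the same succinct system. Conventional $\Xs\varphi$ and $\varphi_1\U\varphi_2$ become $\E\Xs\varphi$ and $\E(\varphi_1\U\varphi_2)$, and $\Ka[\ag]$ becomes $\Ea[\{\ag\}]$. On deterministic systems with complete assignments, the two semantics coincide. Hence $\mc{\SLK}\in\aexp(poly)$.

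For hardness, I would reuse the reduction from $\SOQBFU$ sentences in the proof of Theorem~\ref{th:slkp-succinct}, adjusted so that it lies in the \SLK fragment. Two things must be checked. First, the target formula $\widehat{\phi}$ must bind every agent before the temporal subformula is evaluated. This already holds: all agents $\funct{1},\myDots,\funct{m},y_1,\myDots,y_n$ are bound, and no unbinding is used. The subformula $\E\Xs\Xs\,sat$ is then read as the \SLK formula $\Xs\Xs\,sat$. Second, the induced system $\ISys_{\ISyss}$ must be deterministic. Here $\Trn_e^1$ and $\Trn_e^2$ determine all next-state environment variables uniquely from the current state and joint action. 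The agents have no private variables, and the initial state is unique. So for each state and joint action there is exactly one successor, and the system is deterministic.

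The remaining check is correctness of the reduction itself. Each $y_i$-agent's strategy is universally quantified, and its action at the unique $r_0$-state encodes a truth value. Each $\funct{i}$-agent's uniform strategy at the $r_1$-states depends only on the observed values of $D_i$ (together with the phase bits). It is therefore exactly a Boolean function of arity $|\funct{i}|$. Distinct strategies that differ only on unreachable or irrelevant observations do not affect the outcome, so quantification over strategies faithfully mirrors quantification over functions.

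I expect the main subtle point to be how the $y$-agents are quantified. They are quantified after the function agents, which is exactly the dependence pattern $\forall Y$ innermost in~\eqref{eq:1}. The $l/r$ choices at $r_0$ must set exactly one of $y_i^{+},y_i^{-}$; this is enforced by the protocol and $\Trn_e^2$. Also, $sat$ is evaluated at phase $r_2$, two steps later, which matches $\Xs\Xs\,sat$. Since the \SLKP proof already argues this equivalence and nothing in it used partial binding or nondeterminism, $\aexp(poly)$-hardness transfers, giving completeness. As noted in the introduction, this also shows hardness without epistemic operators and without $\U$.
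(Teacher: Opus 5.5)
Your proposal is correct and follows the paper's proof. Membership comes from \SLK being a fragment of \SLKP. Hardness reuses the $\SOQBFU$ reduction from Theorem~\ref{th:slkp-succinct}: that system is already deterministic and every agent is bound, so one replaces $\E\Xs\Xs\,sat$ by $\Xs\Xs\,sat$. Where the paper simply asserts determinism and full binding, you check them explicitly from $\Trn_e^1$, $\Trn_e^2$ and the binding prefix of $\widehat{\phi}$.
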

\begin{proof}
    The upper bound follows immediately as $\SLK \subseteq \SLKP$. The system used in the hardness proof of Theorem~\ref{th:slkp-succinct} is deterministic, so the same reduction also satisfies the restrictions required by \SLK. Thus, by replacing the $\E \Xs \Xs\, sat$ in $\varphi$ with $\Xs \Xs\, sat$, we can immediately obtain the lower bound as well.
\end{proof}

Theorems~\ref {th:slkp-succinct}--\ref{theorem:slkmcs} show that for both the explicit representation and the succinct representation, \SLKP shares the same model checking complexity as \SLK. In addition, the lower-bound proof shows that the $\aexp(poly)$-hardness result of \mc{\SLK} and \mc{\SLKP} still applies even when epistemic operators and \U~are unavailable.

We now focus on the complexity of \ATLS-style logic and its fragments. We start by examining the objective semantics. 
\begin{theorem}
    Both $\mc{\ATLeta{o}}$ and $\mc{\ATLSeta{o}}$ are $\ptime^{\nexptime}$-complete.
\label{th:atl-succinct}
\end{theorem}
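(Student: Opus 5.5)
Membership is shown for $\mc{\ATLSeta{o}}$ and hardness for $\mc{\ATLeta{o}}$; since $\ATLeta{o}$ is a fragment of $\ATLSeta{o}$, this gives both claims.

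\textbf{Upper bound.} We use a deterministic polynomial-time machine with an $\nexptime$ oracle that labels subformulas bottom-up. The subtlety is that the labelling of exponentially many states cannot be stored explicitly. So the oracle queries will concern the original succinct system together with the formula. We first process the innermost coalition subformulas $\Cstrato[\Agpp]\psi$, where $\psi$ is a pure \LTL formula over propositions. For a given state $\pos$, the question ``$\ISys_{\ISyss},\pos\lmodels{}\Cstrato[\Agpp]\psi$'' is in $\nexptime$. The procedure is: expand $\ISys_{\ISyss}$ explicitly in exponential time, guess an $\Agpp$-uniform memoryless profile (exponential size), prune the protocols, and check the universal \LTL property $\A\psi$ on the pruned system deterministically in time $O(|\ISys|\cdot 2^{O(|\psi|)})$~\cite{vardi1986automata}. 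Its complement is in $\conexptime$.

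For nested formulas, the obstacle is that a state's truth value for an inner subformula may be needed at all exponentially many states. To handle this, each oracle query is given the whole formula $\varphi$ together with a list of ``advice'' answers for the subformulas processed so far. Concretely, I would define queries of the form: does there exist a labelling $g$ of all states with truth values of the inner coalition subformulas $\theta_1,\dots,\theta_j$ that is \emph{correct}, and for which $\theta_{j+1}$ holds at $\pos$? Correctness of $g$ cannot be certified by an $\nexptime$ machine, because it needs both positive and negative witnesses. The standard remedy in $\ptime^{\nexptime}$ is a census trick: the $\ptime$ machine first uses binary search with oracle queries to compute the exact number $N_i$ of states satisfying each $\theta_i$. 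These numbers are exponential in value but have only polynomially many bits. An $\nexptime$ machine can then guess the set of exactly $N_i$ states together with witnessing strategies for each of them. Because the count is exact, the guessed set must be exactly the true extension. Processing the subformulas in order of nesting depth, each count $N_{j+1}$ is obtained by a query ``are there at least $N$ states satisfying $\theta_{j+1}$?''. Such a query guesses the previous extensions with their exact counts and witnesses, then guesses $N$ states with their strategies, and verifies everything in exponential time. Finally, a single query decides whether all initial states satisfy $\varphi$. Getting this counting argument right is the main obstacle, and it is exactly where the previous \nexptime{} claim fails.

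\textbf{Lower bound.} I would reduce from a standard $\ptime^{\nexptime}$-complete problem, namely the succinct analogue of deciding the lexicographically last satisfying assignment, or equivalently a boolean combination or parity of nested existential second-order questions. The building block is the existential case of the hardness proof of Theorem~\ref{th:slkp-succinct}. For an $\SOQBFO$ formula (all $Q_i=\exists$), the universal $y$-agents can be left unbound, and $\Cstrato[\{\funct{1},\dots,\funct{m}\}]\Xs\Xs\, sat$ expresses satisfiability. Universal quantification over $Y$ then becomes the universal path quantifier over the moves of the unconstrained $y$-agents at phase $r_0$, so one $\ATLeta{o}$ modality captures an $\nexptime$-complete question.

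To reach $\ptime^{\nexptime}$, I would encode polynomially many such queries whose instances depend on earlier answers, following the Wagner-style characterization ($\ptime^{\nexptime}=\ptime^{\nexptime}_{\|}$, e.g.\ comparing the numbers of yes-instances). This is done by building a system whose initial phase branches over states carrying different free variables $\overline{X}$ or query indices. Outer coalition modalities are applied with $\neg$ and nesting, so that Boolean combinations of inner $\nexptime$ answers, evaluated at exponentially many encoded states, feed into the outer strategic choices. Ensuring that nesting genuinely increases power, rather than collapsing to a Boolean combination, is the delicate design point. My first attempt would use the $\SOQBFO$ formulas with free variables $\overline{X}$, where an outer coalition chooses $\overline{X}$ via observation-limited moves and the inner modality tests satisfiability.
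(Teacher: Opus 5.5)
Your upper bound is sound, but it takes a different route from the paper. You stay inside $\ptime^{\nexptime}$ and use Hemachandra-style census counting. You binary-search for the exact number of reachable states satisfying each coalition subformula. Every later $\nexptime$ query then guesses all lower-level extensions of exactly those sizes, with witnessing profiles, which forces them to be the true extensions. The paper instead invokes $\ptime^{\nexptime}=\exptime^{\np}_{\parallel\langle\mathrm{poly}\rangle}$ (via Malizia's Hausdorff-class results). It then runs the natural bottom-up labelling of the expanded system, with one round of exponentially many parallel $\np$ queries (one per state) for each strategic subformula. Your argument is self-contained and avoids that nontrivial class equality; the paper's keeps every oracle query a simple $\np$ check on the explicit system.

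The lower bound, however, is not proved. Only the building block is concrete. Even there, $\Cstrato[\{\funct{1},\dots,\funct{m}\}]\Xs\Xs\,sat$ is an $\ATLSeta{o}$ rather than an $\ATLeta{o}$ formula, so the objective must be rephrased, e.g.\ with an until. The rest is a plan whose decisive point you leave open yourself. Letting an outer coalition choose $\overline X$ and an inner modality test satisfiability only expresses $\exists\overline X\,\Psi(\overline X)$, which is again an $\nexptime$ question. Also, $\ptime^{\nexptime}=\ptime^{\nexptime}_{\parallel}$ is not Wagner's theorem, which concerns $\ptime^{\np}_{\parallel}$. Without an argument for it, Boolean combinations of independent $\nexptime$ questions give hardness only for the parallel class, so the source problem must be genuinely adaptive, such as LexMax.

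The missing idea is how one polynomial-size $\ATLeta{o}$ formula can certify the adaptively determined bits $b_1,\dots,b_{w-1}$ of the lexicographic maximum. A bit $b_j=\top$ needs an existential witness for $Q_j$, whereas $b_j=\bot$ needs a refutation. In the paper, at each stage-entry state $q_i$ the coalition commits to claims $c_1,\dots,c_{i-1}$ and a suffix witness. A challenger $D$ then either tests the candidate or checks some $j<i$. On a test, the unbound $U_\ell$ agents range over $Y$, while each $F_r$, seeing only the phase, the stage and $D_r$, implements a Boolean function.

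A positive claim jumps straight to $q_j$ without visiting a $\mathsf{neg}$-state. It is therefore verified by the \emph{same} until objective under the \emph{same} memoryless profile. This works because the stage is observable, so stage-wise witnesses combine into one profile ($\mathcal W_i\subseteq\mathcal W_j$ for $j<i$). A negative claim passes through $n_j$, where $\Cstrateta[\emptyset]{o}\Xs\neg\varphi_k$ refutes stage $j$ recursively.

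The formula $\varphi_{k+1}=\Cstrateta{o}\bigl((\neg\mathsf{neg})\U(\mathsf{yes}\vee(\mathsf{neg}\wedge\Cstrateta[\emptyset]{o}\Xs\neg\varphi_k))\bigr)$ contains $\varphi_k$ only once. A single $\varphi_k$ is also correct at every $q_j$ with $j\le k$, so there is no blow-up, and correctness follows by strong induction over stages. Your sketch contains none of these ingredients. Finally, the observation trick encodes functions only for fixed-argument formulas. You would therefore also need LexMax to remain $\ptime^{\nexptime}$-hard for $\SOQBFOU$, which the paper proves separately via Ackermann expansion and a Tseitin transformation with function-valued auxiliary variables.
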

\begin{proof}
\textbf{(Membership)} 
Since $\ATLSeta{o}$ subsumes $\ATLeta{o}$, it suffices to show that $\mc{\ATLSeta{o}}$ is in $\ptime^{\nexptime}$. We first show that $\ptime^{\nexptime}=\exptime^{\np}_{\parallel\langle\mathrm{poly}\rangle}$, where the latter denotes the class of decision problems that are solvable in exponential time with polynomially many rounds of exponentially many parallel queries to an $\np$ oracle~\cite{malizia2024hausdorff}. $\ptime^{\nexptime} \subseteq \exptime^{\np}_{\parallel\langle\mathrm{poly}\rangle}$ follows by the fact that $\ptime^{\nexptime} =  \exptime^{\np}_{\parallel} \subseteq \exptime^{\np}_{\parallel\langle\mathrm{poly}\rangle}$ (cf. Theorem 4.35 in~\cite{malizia2024hausdorff}). For the converse, instantiate Theorem~4.8 of~\cite{malizia2024hausdorff} with $i=1$, $j=0$, and $c=1$: an exponential-time caller may issue exponentially many parallel $\np$ queries per round, and we use only polynomially many rounds. Together with Lemma~2.3 of the same work, this gives $\exptime_{\parallel\langle poly\rangle}^{\np} \subseteq \nexptime(2^{poly})$, where $\nexptime(2^{poly})$ denotes the $\nexptime$ Hausdorff class of exponential length. By Theorem~4.35 of the same work,
$\nexptime(2^{poly})=\ptime^{\nexptime}$.

We now give an $\exptime^{\np}_{\parallel\langle\mathrm{poly}\rangle}$ procedure for $\mc{\ATLSeta{o}}$. Given a succinct system $\ISyss$ and an $\ATLSeta{o}$ formula $\varphi$, first construct the induced explicit system $\ISys_{\ISyss}$ in exponential time. We calculate the set of states satisfying the state subformulas of $\varphi$ bottom-up. The Boolean cases are trivial and can be achieved in deterministic exponential time. Consider a strategic subformula $\Cstrato[\Agpp]\psi$ whose proper state subformulas have already been evaluated. Replace these subformulas by fresh atomic propositions, label the atomic propositions in $\ISys_{\ISyss}$ with the set of states where the subformula holds, and let $\widehat{\psi}$ (resp. $\ISys_{\ISyss}'$) be the resulting \LTL formula (resp. transition system). In exponential time, construct a Buchi automaton $\mathcal B_{\neg\widehat{\psi}}$ of size $2^{O(|\psi|)}$ accepting exactly the paths that violate $\widehat{\psi}$. For every state $\pos$ of $\ISys_{\ISyss}'$, in parallel issue an $\np$ query whose input contains $\ISys_{\ISyss}'$, $\pos$, $\Agpp$, and $\mathcal B_{\neg\widehat{\psi}}$, asking whether there exists a uniform memoryless strategy profile $\profile\strat_{\Agpp}$ such that $\ISys_\ISyss'^{\profile\strat_{\Agpp}} \times \mathcal B_{\neg\widehat{\psi}}$ accepts nothing, where $\ISys_\ISyss'^{\profile\strat_{\Agpp}}$ is the transition system that has $\pos$ as the initial state and restricts the actions of the agents in $\Agpp$ according to the profile and leaves all other agents unrestricted. The query is in $\np$: guess $\profile\strat_{\Agpp}$, and check its legality and uniformity and the emptiness of the corresponding
B\"uchi product in polynomial time. Our procedure runs in exponential time and uses $O(|\varphi|)$ rounds of exponentially many parallel $\np$ queries. Consequently, $\mc{\ATLSeta{o}} \in \exptime^{\np}_{\parallel\langle\mathrm{poly}\rangle}=\ptime^{\nexptime}$.

\noindent\textbf{(Hardness)} We reduce from $\Sigma_{1}$-LexMax restricted to $\SOQBFOU$. Given a satisfiable $\SOQBFOU$ formula $\Psi(\overline{X})$, where $\overline{X}=\{x_1,\myDots,x_w\}$, the problem asks whether the lexicographically maximal assignment $\tau:\overline{X} \rightarrow \mathbb{B}$ s.t. $\emptyset,\tau \lmodels{so} \Psi(\overline{X})$ has $\tau(x_w)=\top$. This restricted problem is $\ptime^{\nexptime}$-complete; the proof, given in Appendix~\ref{sec:complexity-lex1-soqbfou}, follows from the $\ptime^{\nexptime}$-completeness of $\Sigma_{1}$-LexMax for $\SOQBFO$~(cf. Theorem 5.8 of~\cite{malizia2024hausdorff}) together with model-preserving transformations such as Ackermann expansion~\cite{jiang2023second,ackermann1954solvable} and Tseitin transformation~\cite{tseitin1983complexity}. The reduction is motivated by the technique used in the $\deltat{2}$-hardness proof of $\MC{\ATL}$~\cite{jamroga2008model}.

Consider an $\SOQBFOU$: $\Psi(\overline X)=\exists\funct{1}\myDots\exists\funct{m}\,\forall Y\,\Phi$, where $Y=\{y_1,\myDots,y_n\}$. Let $F=\{\funct{1},\myDots,\funct{m}\}$, and $D_r$ be the fixed ordered argument tuple of $\funct{r}$. Let $b=(b_1,\myDots,b_w)$ be the lexicographically maximal satisfying assignment to $\overline X$. Once $b_1,\myDots,b_{i-1}$ have been determined, consider the query\/:
$$Q_i \DefMath \exists x_{i+1}\myDots\exists x_w\,\exists \funct{1}\myDots\exists\funct{m}\,\forall Y\;
\Phi(b_1,\myDots,b_{i-1},\top,x_{i+1},\myDots,x_w,Y,F).$$
By lexicographic maximality, we have that $b_i=\top$ iff $Q_i$ is true. For every $i$ with $b_i=\top$, there must exist a suffix assignment $a^i_{i+1},\myDots,a^i_w$ (i.e., the assignment to $x_{i+1},\myDots,x_{w}$) and Boolean functions $g^i_r:\mathbb B^{|D_r|}\to\mathbb B$ witnessing $Q_i$.

We construct a succinct system $\ISyss$ with $\Ag=\{C_1,\myDots,C_w\}\cup\{F_1,\myDots,F_m\}\cup\{U_1,\myDots,U_n\}\cup\{D\}$, and coalition $\Agpp=\{C_1,\myDots,C_w,F_1,\myDots,F_m\}$. Intuitively, the $C_j$-agents encode values of the free variables, the $F_r$-agents represent the existential Boolean functions, the $U_\ell$-agents choose the universally quantified valuation of $Y$, and $D$ is a special agent.

The induced system is organised into stages $1,\myDots,w$, with a distinguished stage-entry state $q_i$ for each $i$. At $q_i$, the $C_j$-agents choose values for $\overline X$ with $x_i$ forced to $\top$. The values $c_1,\myDots,c_{i-1}$ serve as claims about the previously determined lexicographic bits, while the values chosen for $x_{i+1},\myDots,x_w$ form a candidate suffix witness for $Q_i$. These choices are stored in the state.

After this commitment, $D$ chooses either $\mathsf{test}$ or $\mathsf{check}(j)$ for some $j<i$. On the $\mathsf{test}$ branch, the agents $U_1,\myDots,U_n$ choose an arbitrary valuation $\nu:Y\to\mathbb B$. Each $F_r$ observes the current stage and exactly the values of the variables in $D_r$. Hence, at every fixed stage $i$, a uniform memoryless strategy of $F_r$ induces a Boolean function $g_r^i:\mathbb B^{|D_r|}\to\mathbb B$. The transition relation then evaluates $\Phi$ under the values selected for $\overline X$, the valuation $\nu$, and the outputs of these functions. If the matrix is true, the play reaches a self-looping state $q_{\mathsf{yes}}$ labelled by the fresh proposition $\mathsf{yes}$; otherwise it reaches a self-looping state $q_{\mathsf{bad}}$, labelled by neither $\mathsf{yes}$ nor $\mathsf{neg}$.

On a check of $j<i$, the transition depends only on the claimed value $c_j$. If $c_j=\top$, the play moves directly to $q_j$. If $c_j=\bot$, it first moves to a state $n_j$, labelled by the fresh proposition $\mathsf{neg}$, whose unique successor is $q_j$. Note that the transition system itself does not determine whether the claim $c_j$ agrees with $b_j$. Instead, this will be enforced recursively by the $\ATLeta{o}$ formula below.

\begin{figure}[t]
\centering
\begin{tikzpicture}[
  >=stealth,
  every node/.style={font=\small},
  state/.style={
    draw,
    rounded corners,
    align=center,
    inner sep=2.5pt,
    minimum height=7mm
  },
  sink/.style={
    draw,
    rounded corners,
    align=center,
    inner sep=2.5pt,
    minimum height=7mm
  },
  lab/.style={
    font=\footnotesize,
    fill=white,
    inner sep=1pt
  }
]

\node[state] (qi) {$q_i$\\stage entry};

\node[state, right=7mm of qi] (commit)
  {$C_{1},\ldots,C_{w}$ commit\\claims + suffix};

\node[state, right=7mm of commit] (choice)
  {$D$ chooses\\$\mathsf{test}$ / $\mathsf{check}(j)$};

\node[state, below=8mm of choice] (u)
  {$U_1,\ldots,U_n$\\choose $Y$};


\node[state, below=7mm of u] (f)
  {$F_1,\ldots,F_m$\\choose outputs};

\node[sink, below left=7mm and 2mm of f] (yes)
  {$q_{\mathsf{yes}}$\\$\mathsf{yes}$};

\node[sink, below right=7mm and 2mm of f] (bad)
  {$q_{\mathsf{bad}}$};

\node[state, above right=9mm and 20mm of choice] (qj)
  {$q_j$};

\node[state, below=21mm of qj] (nj)
  {$n_j$\\$\mathsf{neg}$};

\draw[->] (qi) -- (commit);
\draw[->] (commit) -- (choice);

\draw[->] (choice)
  -- node[lab, left] {$\mathsf{test}$} (u);

\draw[->] (u) -- (f);

\draw[->] (f)
  -- node[lab, left] {$\widehat{\Phi}$} (yes);

\draw[->] (f)
  -- node[lab, right] {$\neg\widehat{\Phi}$} (bad);

\draw[->] (choice)
  -- node[lab, above, sloped]
     {$\mathsf{check}(j),\ c_j=\top$}
     (qj);

\draw[->] (choice)
  -- node[lab, below, sloped]
     {$\mathsf{check}(j),\ c_j=\bot$}
     (nj);

\draw[->] (nj) -- (qj);

\draw[->] (yes) edge[loop left] ();
\draw[->] (bad) edge[loop right] ();

\end{tikzpicture}

\caption{Stage gadget for $q_i$ in the hardness construction.
At stage $i$, the coalition commits to claims for the preceding
lexicographic bits and to a candidate suffix. The challenger $D$
then either tests the current candidate or checks an earlier stage
$j<i$.}
\label{fig:atlir-stage-gadget}
\end{figure}

Figure~\ref{fig:atlir-stage-gadget} shows the transition structure associated with one stage $i$. Appendix~\ref{app:atl-succinct-construction} gives the formal symbolic representation of this construction in the sense of Definition~\ref{def:symbolic-transition-system}. In particular, the current stage is encoded in binary, observable to the coalition, and each $F_r$ additionally observes exactly the variables in $D_r$. The representation has size polynomial in $|\Psi(\overline{X})|$. We denote $q_w$ as the unique initial state and $\ISys_\ISyss$ as the induced transition system.

We now define the $\ATLeta{o}$ formula that verifies the claims made during these stages. Let $\varphi_0=\top$ and, for $k\geq 0$, let
\[
\chi_{k+1}\DefMath
(\neg \mathsf{neg})\ \U\
\Bigl(\mathsf{yes}\ \vee\
(\mathsf{neg}\wedge\Cstrateta[\emptyset]{o}\Xs\neg\varphi_k)\Bigr),
\qquad
\varphi_{k+1}\DefMath\Cstrateta{o}\chi_{k+1}.
\]
The two possible check transitions explain the recursion. A positive claim $c_j=\top$ leads directly to $q_j$, without encountering $\mathsf{neg}$, so the current until objective must continue to hold from $q_j$. A negative claim $c_j=\bot$ instead reaches $n_j$. Since $n_j$ has the unique successor $q_j$,
\[
\ISys_\ISyss,n_j\lmodels{\ATL}\Cstrateta[\emptyset]{o}\Xs\neg\varphi_k
\quad\text{iff}\quad
\ISys_\ISyss,q_j\not\lmodels{\ATL}\varphi_k.
\]
Thus, a negative check satisfies the right-hand side of the until formula precisely when the corresponding earlier stage does not satisfy $\varphi_k$. Each recursive level contains the preceding formula only once, so $|\varphi_w|$ is polynomial in $|\Psi|$.

For every $i$, let $\mathcal W_i$ be the set of uniform memoryless strategy profiles $\profile\strat_{\Agpp}$ satisfying the following condition: for every $j\leq i$ with $b_j=\top$, the stage-$j$ behaviour of $\profile\strat_{\Agpp}$ uses the fixed witnesses chosen for $Q_j$. Concretely, at such a stage $j$, it claims $c_h=b_h$ for every $h<j$, uses the fixed suffix $a^j_{>j}$, and lets each $F_r$ implement $g_r^j$. Its behaviour at all other stages may be arbitrary, provided that it is coherent. Since the stage is observable to the coalition, these prescriptions can be combined into uniform memoryless profiles; hence every $\mathcal W_i$ is nonempty. Moreover,
\begin{equation}
\label{eq:atl-witness-inclusion}
j<i\quad\Longrightarrow\quad \mathcal W_i\subseteq\mathcal W_j.
\end{equation}

We prove by strong induction on $i$ that, for every $k\geq i$,
\begin{equation}
\label{eq:atl-lex-invariant}
\begin{aligned}
 b_i=\top
 &\ \Longrightarrow\
 \forall \profile\strat_{\Agpp}\in\mathcal W_i\;
 \forall \pi\in\out\!\left(\assign_{\profile\strat_{\Agpp}},q_i\right)\colon
 \ISys_\Psi,\pi,0\lmodels{\ATL}\chi_k,\\
 b_i=\bot
 &\ \Longrightarrow\
 \forall \profile\strat_{\Agpp}\;\exists \pi\in
 \out\!\left(\assign_{\profile\strat_{\Agpp}},q_i\right)\colon
 \ISys_\Psi,\pi,0\not\lmodels{\ATL}\chi_k,
\end{aligned}
\end{equation}
where the second quantifier over $\profile\strat_{\Agpp}$ ranges over all uniform memoryless strategy profiles for $\Agpp$.

For $i=1$, there is no check branch. Suppose first that $b_1=\top$. Fix an arbitrary $\profile\strat_{\Agpp}\in\mathcal W_1$ and an arbitrary
$\pi\in\out(\assign_{\profile\strat_{\Agpp}},q_1)$. Since $Q_1$ is true, the stage-$1$ witnesses used by $\profile\strat_{\Agpp}$ make the matrix true for every valuation of $Y$ chosen by the $U_\ell$-agents. Hence $\pi$ reaches $q_{\mathsf{yes}}$, while all preceding states satisfy $\neg\mathsf{neg}$. Therefore $\ISys_\ISyss,\pi,0\lmodels{\ATL}\chi_k$ for every $k\geq1$.

Now suppose that $b_1=\bot$, and fix an arbitrary uniform memoryless profile $\profile\strat_{\Agpp}$. Its stage-$1$ behaviour determines a suffix assignment and one Boolean function for each $F_r$. Since $Q_1$ is false, some valuation of $Y$ falsifies $\Phi$ under these choices. The $U_\ell$-agents can choose this valuation, producing an outcome $\pi\in\out(\assign_{\profile\strat_{\Agpp}},q_1)$ that reaches $q_{\mathsf{bad}}$. Since $q_{\mathsf{bad}}$ self-loops and satisfies neither $\mathsf{yes}$ nor $\mathsf{neg}$, the right-hand side of the until formula is never reached, and hence $\ISys_\ISyss,\pi,0\not\lmodels{\ATL}\chi_k$. Thus~\eqref{eq:atl-lex-invariant} holds for $i=1$.

Now let $i>1$, assume~\eqref{eq:atl-lex-invariant} holds for every $j<i$. For an arbitrary $k\geq i$, suppose first that $b_i=\top$. Fix an arbitrary $\profile\strat_{\Agpp}\in\mathcal W_i$ and an arbitrary
$\pi\in\out(\assign_{\profile\strat_{\Agpp}},q_i)$. If $D$ chooses $\mathsf{test}$, the stage-$i$ witnesses encoded by $\profile\strat_{\Agpp}$ make the matrix true for every valuation of $Y$. Thus, $\pi$ reaches $q_{\mathsf{yes}}$, and satisfies $\chi_k$.

Suppose instead that $D$ checks some $j<i$. Since the stage-$i$ behaviour of $\profile\strat_{\Agpp}$ uses the witnesses for $Q_i$, we have $c_j=b_j$. If $b_j=\top$, then $c_j=\top$, and the play moves directly to $q_j$. By~\eqref{eq:atl-witness-inclusion}, $\profile\strat_{\Agpp}\in\mathcal W_i\subseteq\mathcal W_j$. Since $k\geq i>j$, the first clause of the induction hypothesis, applied to this same profile $\profile\strat_{\Agpp}$, implies that every outcome from $q_j$ compatible with $\profile\strat_{\Agpp}$ satisfies $\chi_k$. The finite prefix from $q_i$ to $q_j$ contains no $\mathsf{neg}$-state, so the whole outcome $\pi$ satisfies $\chi_k$.

If $b_j=\bot$, then $c_j=\bot$, and the play reaches $n_j$. Since $k-1\geq j$, the second clause of the induction hypothesis and the semantics of $\Cstrateta[\Agpp]{o}$ imply $\ISys_\ISyss,q_j\not\lmodels{\ATL}\varphi_{k-1}$. Because $q_j$ is the unique successor of $n_j$, $\ISys_\ISyss,n_j\lmodels{\ATL}\Cstrateta[\emptyset]{o}\Xs\neg\varphi_{k-1}$.

Hence the right-hand side of the until formula holds at $n_j$, while all preceding states satisfy $\neg\mathsf{neg}$. Therefore $\ISys_\ISyss,\pi,0\lmodels{\ATL}\chi_k$. Since both $\profile\strat_{\Agpp}\in\mathcal W_i$ and its outcome $\pi$ were arbitrary, the first clause of~\eqref{eq:atl-lex-invariant} follows.

For the second clause, suppose that $b_i=\bot$, and fix an arbitrary uniform memoryless profile $\profile\strat_{\Agpp}$. Its stage-$i$ behaviour determines claims $c_1,\myDots,c_{i-1}$, a suffix assignment, and the Boolean functions implemented by the $F_r$-agents.

Suppose first that some earlier claim is incorrect, and let $D$ check an index $j<i$ with $c_j\neq b_j$. If $c_j=\bot$ and $b_j=\top$, the play reaches $n_j$. Since $k-1\geq j$, the first clause of the induction hypothesis applies at $q_j$. As $\mathcal W_j\neq\emptyset$, it follows from the semantics of $\Cstrateta[\Agpp]{o}$ that $\ISys_\ISyss,q_j\lmodels{\ATL}\varphi_{k-1}$. Consequently, $\ISys_\ISyss,n_j\not\lmodels{\ATL}\Cstrateta[\emptyset]{o}\Xs\neg\varphi_{k-1}$. At $n_j$, $\neg\mathsf{neg}$ is false as well, so every outcome passing through $n_j$ violates $\chi_k$.

If $c_j=\top$ and $b_j=\bot$, the play moves directly to $q_j$. Since $k\geq i>j$, the second clause of the induction hypothesis can be instantiated with the currently fixed profile $\profile\strat_{\Agpp}$. Hence, there exists $\pi'\in\out\!\left(\assign_{\profile\strat_{\Agpp}},q_j\right)$ such that $\ISys_\ISyss,\pi',0\not\lmodels{\ATL}\chi_k$. The agents outside $\Agpp$ can choose the corresponding continuation after the check. Since the check prefix from $q_i$ to $q_j$ satisfies $\neg\mathsf{neg}$ and does not satisfy the right-hand side of the until formula, concatenating this prefix with $\pi'$ gives an outcome from $q_i$, compatible with $\profile\strat_{\Agpp}$, that violates $\chi_k$.

It remains to consider the case in which every earlier claim is correct: $c_j=b_j$ for every $j<i$. The profile $\profile\strat_{\Agpp}$ fixes a suffix assignment and, at stage $i$, induces a Boolean function $g_r^i:\mathbb B^{|D_r|}\to\mathbb B$ for every $r$. Since $b_i=\bot$, the query $Q_i$ is false. Therefore, for these particular suffix values and functions, there exists a valuation $\nu:Y\to\mathbb B$ that falsifies $\Phi$. Agent $D$ chooses $\mathsf{test}$, and the $U_\ell$-agents choose $\nu$, producing an outcome $\pi\in\out(\assign_{\profile\strat_{\Agpp}},q_i)$ that reaches $q_{\mathsf{bad}}$. Since $q_{\mathsf{bad}}$ self-loops and satisfies neither $\mathsf{yes}$ nor $\mathsf{neg}$, $\ISys_\ISyss,\pi,0\not\lmodels{\ATL}\chi_k$. Since $\profile\strat_{\Agpp}$ was arbitrary, the second clause of~\eqref{eq:atl-lex-invariant} follows.

Finally,~\eqref{eq:atl-lex-invariant} and the semantics of $\Cstrateta[\Agpp]{o}$ yield, for every $k\geq i$,
\begin{equation}
\label{eq:atl-lex-satisfaction}
\ISys_\ISyss,q_i\lmodels{\ATL}\varphi_k
\quad\text{iff}\quad
b_i=\top.
\end{equation}
Indeed, if $b_i=\top$, choose any profile in the nonempty set $\mathcal W_i$; if $b_i=\bot$, every uniform memoryless profile has an outcome violating $\chi_k$. Taking $i=k=w$ in~\eqref{eq:atl-lex-satisfaction}, we obtain
$\ISys_\ISyss,q_w\lmodels{\ATL}\varphi_w$ iff $b_w=\top$.

Since both $|\varphi_w|$ and $|\ISyss|$ are polynomial in the size of $\Psi(\overline X)$, $\mc{\ATLeta{o}}$ is $\ptime^{\nexptime}$-hard. Since $\ATLeta{o}$ is a fragment of $\ATLSeta{o}$, the same lower bound holds for $\mc{\ATLSeta{o}}$.
\end{proof}

For objective one-step Coalition Logic, imperfect information does not increase the succinct model checking complexity under memoryless strategies. At
a state $\pos$, only the coalition action chosen at $\pos$ matters. In addition, every legal choice at $\pos$ can be extended to a uniform memoryless strategy. Thus, $\mc{\CLeta{o}}$ matches the complexity of succinct model checking over perfect information transition systems~\cite{van2006complexity}.
\begin{corollary}
     $\CLeta{o}$ is $\pspace$-complete.
\end{corollary}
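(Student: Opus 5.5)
The proof splits into a $\pspace$ upper bound and a $\pspace$ lower bound; the lower bound is inherited from perfect information.

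\textbf{Reduction to one-step choices.} The key observation, stated just before the corollary, is that for $\Cstrateta[\Agpp]{o}\Xs\varphi$ evaluated at a state $\pos$, only the coalition's action at $\pos$ matters. Assume $\varphi$ has already been evaluated. Then
\[
\ISys_\ISyss,\pos\lmodels{\ATL}\Cstrateta[\Agpp]{o}\Xs\varphi
\iff
\exists \profile\mov_{\Agpp}\in\prod_{\ag\in\Agpp}\protocol_\ag(\pos)\;
\forall \profile\mov\in\protocol(\pos)\text{ extending }\profile\mov_{\Agpp}\;
\forall t\;\big((\pos,\profile\mov,t)\in\trans\Rightarrow t\lmodels{}\varphi\big).
\]
For the direction from right to left, one must turn the local choice $\profile\mov_{\Agpp}$ into a uniform memoryless profile. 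For each $\ag\in\Agpp$, define $\strat_\ag$ to play $\profile\mov_{\Agpp}(\ag)$ on the whole class $[\pos]_{\obsrel[\ag]}$, and any fixed legal action on every other class. This is coherent because the system is uniform: $\protocol_\ag$ is constant on observation classes, and it is nonempty on reachable states by well-formedness. The direction from left to right is immediate, since any witnessing profile yields its own action at $\pos$. Hence imperfect information plays no role in objective \CL.

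\textbf{Upper bound.} I would decide $\ISys_\ISyss,\pos\lmodels{}\varphi$ by a recursive polynomial-space procedure on the structure of $\varphi$, with states stored as valuations of $\Var$ of polynomial size.
\begin{itemize}
\item Boolean cases are handled directly.
\item For $\Cstrateta[\Agpp]{o}\Xs\varphi'$, the procedure iterates over coalition actions, which have polynomial-size encodings. For each one it iterates over completions $\profile\mov$ and successors $t$. Legality of $\profile\mov$ and membership in $\trans$ are checked by a propositional satisfiability test on formulas of polynomial size, with no existential variables left once all values are fixed. It then recurses on $t$ for $\varphi'$.
\item Recursion depth is $O(|\varphi|)$ and each level uses polynomial space, so the total space is polynomial.
\end{itemize}
One detail is that quantification ranges over reachable states $\setpos$. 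I would argue that the successors $t$ reached from a reachable $\pos$ are themselves reachable, so no reachability test is needed inside the recursion. The top-level universal over $\initpos$ is also polynomial-space enumerable.

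\textbf{Lower bound.} I would invoke the $\pspace$-hardness of succinct \CL model checking over SRML from~\cite{van2006complexity}. The plan is to translate SRML models polynomially into succinct transition systems in the sense of Definition~\ref{def:symbolic-transition-system}, giving every agent full observation by setting $\OVar_\ag=\Var_\env$ and putting all variables into the environment. Under full observability, uniform memoryless strategies coincide with memoryless strategies, and memoryless strategies suffice for one-step objectives. The semantics of $\Cstrateta{o}\Xs$ therefore agrees with the perfect-information coalition modality.

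\textbf{Expected obstacle.} The main obstacle is faithfully encoding SRML guarded commands into the $\Prot_\ag$/$\Trn$ format while preserving well-formedness. In particular, every legal joint action must have a successor, and SRML's rule of a default skip when no guard is enabled must be reproduced. I would try first to introduce one action per guarded command plus a skip action, take $\Prot_\ag$ to be the guard, and take $\Trn$ to be the assignment effect with frame conditions.
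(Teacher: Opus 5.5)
Your proposal is correct and follows the paper's route: only the coalition's action at the current state matters, and by uniformity and well-formedness every legal local choice extends to a uniform memoryless profile, so the problem reduces to the perfect-information setting of \cite{van2006complexity}. You are more explicit than the paper's one-paragraph justification in two ways: you give a direct polynomial-space recursion for membership, which avoids translating the Boolean format back into SRML, and you sketch the full-observability SRML encoding needed for hardness. One small point: the left-to-right direction of your lemma, which you call immediate, also uses well-formedness, namely to extend a one-step prefix to an infinite outcome.
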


\begin{figure}[t]
\centering
\begin{tikzpicture}[
    >=stealth,
    every node/.style={font=\scriptsize},
    state/.style={
        draw,
        rounded corners,
        align=center,
        inner sep=2pt,
        minimum width=12mm,
        minimum height=7mm
    },
    lab/.style={
        font=\scriptsize,
        align=center,
        fill=white,
        inner sep=1pt
    }
]

\node[state] (qi) at (0,0)
    {$q_i$\\stage entry};

\node[state] (pos) at (6.8,1.0)
    {$p_{j,\mathbf c,\nu}$\\$\mathsf{pos}$};

\node[state] (neg) at (6.8,-1.0)
    {$n_j$\\$\mathsf{neg}$};

\node[state] (yes) at (9.4,1.45)
    {$q_{\mathsf{yes}}$\\$\mathsf{yes}$};

\node[state] (bad) at (9.4,0.55)
    {$q_{\mathsf{bad}}$};

\node[state] (qj) at (9.4,-1.0)
    {$q_j$};

\draw[->] (qi) --
    node[lab, above, sloped]
    {$\begin{array}{c}
      C_{<i}\text{ chooses }\mathbf c,\;
      D\text{ chooses }j,\;
      U\text{ chooses }\nu\\
      j=i\ \text{or}\ (j<i\wedge c_j=\top)
    \end{array}$}
    (pos);

\draw[->] (qi) --
    node[lab, below, sloped]
    {$\begin{array}{c}
      C_{<i}\text{ chooses }\mathbf c,\;
      D\text{ chooses }j\\
      j<i\wedge c_j=\bot
    \end{array}$}
    (neg);

\draw[->] (pos) --
    node[lab, above, sloped] {$\widehat{\Phi}_j$}
    (yes);

\draw[->] (pos) --
    node[lab, below, sloped] {$\neg\widehat{\Phi}_j$}
    (bad);

\draw[->] (neg) --
    (qj);

\draw[->] (yes) edge[loop right] ();
\draw[->] (bad) edge[loop right] ();

\node[
    align=center,
    font=\scriptsize,
    below=3mm of pos
] {
    $W=C\cup\{F_1,\ldots,F_m\}$\\
    $E_W(p_{j,\mathbf c,\nu})
      =\{p_{j,\mathbf c,\mu}\mid \mu:Y\to\mathbb B\}$
};

\end{tikzpicture}

\caption{Stage gadget for $q_i$ in the subjective $\CL$ hardness
construction. From $q_i$, the claims chosen by $C_1,\ldots,C_{i-1}$,
the index chosen by $D$, and the valuation chosen by the
$U_\ell$-agents jointly determine the successor state. A positive
challenge leads directly to $p_{j,\mathbf c,\nu}$, while a negative
challenge leads directly to $n_j$.}
\label{fig:cl-subjective-stage-gadget}
\end{figure}
We now focus on subjective semantics.
\begin{theorem}
\label{th:cl-subjective-succinct}
    $\mc{\CLeta{s}}$, $\mc{\ATLeta{s}}$, and $\mc{\ATLSeta{s}}$ are $\ptime^{\nexptime}$-complete.
\end{theorem}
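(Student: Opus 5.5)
The proof splits into an upper bound for $\mc{\ATLSeta{s}}$ and a lower bound for $\mc{\CLeta{s}}$. Since $\CLeta{s}\subseteq\ATLeta{s}\subseteq\ATLSeta{s}$, these two results settle all three problems.

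\textbf{Membership.} I will adapt the membership proof of Theorem~\ref{th:atl-succinct}, using the equality $\ptime^{\nexptime}=\exptime^{\np}_{\parallel\langle\mathrm{poly}\rangle}$ established there. First I build $\ISys_{\ISyss}$ in exponential time. I then label the state subformulas bottom-up, one nesting level per round. For a subformula $\Cstrats[\Agpp]\psi$, I replace its already evaluated proper state subformulas by fresh propositions and construct $\mathcal B_{\neg\widehat\psi}$. For every state $\pos$ I issue, in parallel, one $\np$ query. The query guesses a uniform memoryless profile $\profile\strat_{\Agpp}$, which has polynomial size in the explicit system. It then checks in polynomial time that for every $\pos'\in\operatorname{Alt}_{\Agpp}(\pos)$ the product of the restricted system started at $\pos'$ with $\mathcal B_{\neg\widehat\psi}$ is empty. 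The set $\operatorname{Alt}_{\Agpp}(\pos)$ is computable from the explicit system, so this check stays polynomial. We need $O(|\varphi|)$ rounds, so the problem is in $\ptime^{\nexptime}$.

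\textbf{Hardness.} I again reduce from $\Sigma_1$-LexMax for $\SOQBFOU$, following the staged construction of Theorem~\ref{th:atl-succinct}. Under subjective semantics the nesting depth available in $\CL$ is only one step, so the test phase must be compressed, following Figure~\ref{fig:cl-subjective-stage-gadget}. From $q_i$, a single joint move does everything at once. The claim agents $C_{<i}$ choose $\mathbf c$, the challenger $D$ chooses an index $j\le i$, and the $U_\ell$ choose $\nu$. If $j=i$, or if $j<i$ and $c_j=\top$, the move leads to $p_{j,\mathbf c,\nu}$, labelled $\mathsf{pos}$. If $j<i$ and $c_j=\bot$, it leads to $n_j$, labelled $\mathsf{neg}$, whose unique successor is $q_j$.

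The universal quantification over $Y$ is recovered through the epistemic component of subjective semantics rather than through branching. I make $\nu$ unobservable to the coalition $W=C\cup\{F_1,\dots,F_m\}$, so that $E_W(p_{j,\mathbf c,\nu})=\{p_{j,\mathbf c,\mu}\mid\mu\}$. Each $F_r$ observes $j$ and the bits of $\nu$ on $D_r$, so a uniform strategy at $p_{j,\mathbf c,\cdot}$ implements a Boolean function $g^j_r$. The witnesses for $x_{>j}$ are hard-wired as further actions of the $C$-agents at $p$-states; they are uniform because the $C$-agents do not observe $\nu$. The next move then goes to $q_{\mathsf{yes}}$ iff $\widehat\Phi_j$ holds. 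Hence $p_{j,\mathbf c,\nu}\models\Cstrats[W]\Xs\,\mathsf{yes}$ iff $Q_j$ is true under the claims $\mathbf c$.

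The recursive formula I plan to use is
\[
\varphi_{k+1}\DefMath\Cstrats[C_{<\cdot}]\Xs\bigl((\mathsf{pos}\wedge\Cstrats[W]\Xs\,\mathsf{yes})\vee(\mathsf{neg}\wedge\Cstrats[\emptyset]\Xs\neg\varphi_k)\bigr).
\]
Since this is a one-step formula, it lies in $\CLeta{s}$. I then prove, by strong induction as in \eqref{eq:atl-lex-invariant}, that $q_i\models\varphi_k$ iff $b_i=\top$ for all $k\ge i$. The argument differs from the objective case in one simplification: the claims are fixed per stage by a single one-step choice, so the witness-set inclusion \eqref{eq:atl-witness-inclusion} is no longer needed. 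The case analysis on $c_j$ versus $b_j$ carries over directly.

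\textbf{Main obstacle.} The delicate step is the epistemic bookkeeping. The indistinguishability class $\operatorname{Alt}$ at $q_i$ and at the $p$-states must be exactly as intended. In particular, the $F_r$ and $C$ agents must not conflate states from different stages or different claims, which I will enforce by making $j$ and $\mathbf c$ observable to them. Claims chosen at $q_i$ must be unaffected by alternative states, so I will make each $q_i$ fully observable. Finally, the empty-coalition modality $\Cstrats[\emptyset]$ must behave objectively; this holds because $\operatorname{Alt}_\emptyset(\pos)$ is empty under the given definition. That last point needs care: if the empty union makes the modality vacuous, I would instead replace $\Cstrats[\emptyset]\Xs\neg\varphi_k$ by a coalition $\{D\}$ with full observation.
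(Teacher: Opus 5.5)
Your membership argument and your hardness gadget are the paper's own: the same $\np$ query ranging over $\operatorname{Alt}_A(s)$, the same one-step stage gadget with positive states $p_{j,\mathbf c,\nu}$ and negative states $n_j$, and the epistemic closure $\{p_{j,\mathbf c,\mu}\mid\mu:Y\to\mathbb B\}$ in place of branching over $Y$. However, your primary formula fails, and your justification for it runs backwards. Since $\operatorname{Alt}_{\emptyset}(s)=\bigcup_{a\in\emptyset}[s]_{\sim_a}=\emptyset$, the modality $\Cstrateta[\emptyset]{s}\Xs\neg\varphi_k$ is \emph{vacuously true} at every state, not objective. The negative disjunct then collapses to $\mathsf{neg}$. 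At $q_i$ the coalition can therefore claim $c_j=\bot$ for every $j<i$, so every negative challenge is accepted. Only the test $\exists x_{>i}\,\exists F\,\forall Y\,\Phi(\bot,\dots,\bot,\top,x_{>i},Y,F)$ remains, and it can hold although $b_i=\bot$ (take $w=2$ with satisfying assignments exactly $(\top,\bot)$ and $(\bot,\top)$). Because $\CLeta{s}$ is the pure subjective fragment, no objective $\emptyset$-modality is available either, so your fallback is mandatory. Use $\Cstrateta[D]{s}$, as the paper does, with $[n_j]_{\sim_D}=\{n_j\}$ and only a dummy action for $D$ at $n_j$; then $n_j\models\Cstrateta[D]{s}\Xs\neg\varphi_k$ iff $q_j\not\models\varphi_k$. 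Relatedly, the outer coalition cannot be the stage-dependent $C_{<\cdot}$, because the same $\varphi_k$ is evaluated at every $q_j$. Use all of $C$, with $C_h$ for $h\ge i$ restricted to a dummy action at $q_i$.

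Second, the case analysis does not carry over directly from the objective proof. There a true claim $c_j=\top$ sends the play back to $q_j$, where the induction hypothesis is applied. In your gadget it sends the play to $p_{j,\mathbf c,\nu}$, which tests $Q_j$ against the stage-$i$ claims $\mathbf c=(c_1,\dots,c_{j-1})$. So, for $b_i=\bot$, $D$ must challenge the \emph{least} $j$ with $c_j\neq b_j$. If $c_j=\bot$ and $b_j=\top$, the induction hypothesis gives $q_j\models\varphi_{k-1}$ and refutes the negative branch. If $c_j=\top$ and $b_j=\bot$, minimality yields $\mathbf c=(b_1,\dots,b_{j-1})$, so the falsity of $Q_j$ refutes the positive branch. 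An arbitrary wrong index does not suffice, since $Q_j$ over an incorrect prefix may well be true. Dually, for $b_i=\top$, correct positive challenges are discharged by the truth of $Q_j$ itself, and the induction hypothesis is needed only for negative challenges. With these changes your proof coincides with the paper's.
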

\begin{proof}
    For the upper bound, the bottom-up procedure from Theorem~\ref{th:atl-succinct} extends directly to $\ATLSeta{s}$. For a subjective strategic subformula $\Cstrats[A]\psi$ at $s$, an $\np$ query guesses one uniform memoryless profile $\profile\strat_A$ and checks that the corresponding B\"uchi product is empty from every $s'\in\operatorname{Alt}_A(s)$. Such a verification procedure still runs in polynomial time in the size of the input to the oracle. Hence, the overall problem remains in $\ptime^{\nexptime}$. It therefore remains to prove $\ptime^{\nexptime}$-hardness for $\mc{\CLeta{s}}$. The theorem then follows from $\CLeta{s}\subseteq\ATLeta{s}\subseteq\ATLSeta{s}$.

    
    We reduce from the same $\ptime^{\nexptime}$-complete $\Sigma_1$-LexMax problem for $\SOQBFOU$ used in Theorem~\ref{th:atl-succinct}. We construct a succinct system $\ISyss$ whose formal symbolic representation is given in Appendix~\ref{app:cl-succinct-construction}. Its agents are $\Ag=\{C_1,\myDots,C_w\}\cup\{F_1,\myDots,F_m\}
\cup\{U_1,\myDots,U_n\}\cup\{D\}$. Let $C=\{C_1,\myDots,C_w\}$ and $W=C\cup\{F_1,\myDots,F_m\}$. The induced system has a stage-entry state $q_i$ for every $1\leq i\leq w$, with the initial state of the system as $q_{w}$. At $q_i$, the agents $C_1,\myDots,C_{i-1}$ choose claims $c_1,\myDots,c_{i-1}$ for the already determined lexicographic bits, while $C_i,\myDots,C_w$ have only a dummy action. Agent $D$ chooses an index $j\leq i$, and the $U_\ell$ agents choose an arbitrary valuation $\nu:Y\to\mathbb B$. If $j=i$, or if $j<i$ and $c_j=\top$, the successor is a positive-check state $p_{j,\mathbf c,\nu}$ labelled by $\mathsf{pos}$, where $\mathbf{c}=(c_1,\myDots,c_{j-1})$. If $j<i$ and $c_j=\bot$, the successor is a negative-check state $n_j$ labelled by $\mathsf{neg}$. As in the proof of Theorem~\ref{th:atl-succinct}, we ensure that $q_i$ and $n_{j}$ are epistemically singleton for all agents.

At $p_{j,\mathbf{c},\nu}$, the agents $C_1,\myDots,C_j$ have only a dummy action, whereas the agents $C_{j+1},\myDots,C_w$ choose a suffix assignment to $x_{j+1},\myDots,x_w$ and every $F_r$ chooses a Boolean output. Each $C_h$ observes the phase, the stage, and the $\overline{X}$-values, but no variable of $Y$; each $F_r$ observes the same information and, additionally, exactly the variables in $D_r$. For a state $\pos$ and a set of agents $A$, we denote $E_{A}(\pos)=\{\pos' \mid \exists \ag \in A,\, \pos' \obsrel \pos\}$. Consequently, under our construction, we have that $E_{W}(p_{j,\mathbf{c},\nu})=\{p_{j,\mathbf{c},\mu} \mid \mu:Y \rightarrow \mathbb{B}\}$. The transition from a positive-check state reaches a state labelled by $\mathsf{yes}$ exactly when $\Phi$ is true under the stored prefix, $x_j=\top$, the suffix actions, the current valuation of $Y$, and the outputs of the $F_r$ agents. Hence, for every $\nu$,
\begin{equation}
\label{eq:cl-positive-lemma}
\ISys_\ISyss,p_{j,\mathbf c,\nu}\lmodelss{\CL}\Cstrateta[W]{s} \Xs \mathsf{yes}
 \text{ iff }
\exists x_{j+1}\myDots\exists x_w\exists F\,\forall Y\,
\Phi(\mathbf c,\top,x_{j+1},\myDots,x_w,Y,F).
\end{equation}



In addition, since $n_{j}$ is epistemically singleton for all agents and has a unique successor state $q_{j}$, for every $\CL$ formula $\varphi$, it follows that 
\begin{equation}
\label{eq:cl-negative-lemma}
\ISys_\ISyss,n_j\lmodelss{\CL}\Cstrateta[D]{s} \Xs \varphi
\text{ iff }
\ISys_\ISyss,q_j\lmodelss{\CL}\varphi.
\end{equation}
We define the $\CL$ formulas $\varphi_0=\top$ and, for $k\geq0$,
\begin{equation}
\label{eq:cl-recursive-formula}
\varphi_{k+1}\DefMath
\Cstrateta[C]{s} \Xs \Bigl((\mathsf{pos}\wedge\Cstrateta[W]{s} \Xs \mathsf{yes})\vee (\mathsf{neg}\wedge\Cstrateta[D]{s} \Xs \neg\varphi_k)\Bigr).
\end{equation}

We claim that $\ISys_{\ISyss},q_{w} \lmodelss{\CL} \varphi_{w}$ iff $b_{w}=\top$. We prove this claim by strong induction on $i$, showing that, for every $k\geq i$,
\begin{equation}
\label{eq:cl-lex-invariant}
\ISys_{\ISyss},q_i\lmodelss{\CL}\varphi_k
 \text{ iff } b_i=\top.
\end{equation}
For $i=1$, there are no earlier claims, so $D$ can only select index $1$. Then, ~\eqref{eq:cl-lex-invariant} follows directly from \eqref{eq:cl-positive-lemma} and the fact that $b_1=\top$ iff $Q_1$ is true.

Now suppose that $i>1$ and~\eqref{eq:cl-lex-invariant} holds for every $j<i$. 

\noindent $(\mathbf{\Rightarrow})$ Suppose that $b_i=\top$. At $q_i$, we let coalition $C$ assign $c_j=b_j$ for every $j<i$. First, if $D$ selects index $i$, from \eqref{eq:cl-positive-lemma} and the truth of $Q_i$, it follows that the resulting positive state satisfies $\Cstrateta[W]{s} \Xs \mathsf{yes}$. Secondly, if $D$ selects some $j<i$ with $b_j=\top$, the same argument applies because the query $Q_j$ is true and the stored prefix is $(b_1,\myDots,b_{j-1})$. Finally, if $b_j=\bot$, the successor is $n_j$; since $k-1\geq j$, the induction hypothesis gives $\ISys_{\ISyss},q_j\not\lmodelss{\CL}\varphi_{k-1}$, and~\eqref{eq:cl-negative-lemma} yields $\ISys_{\ISyss},n_j\lmodelss{\CL}\Cstrateta[D]{s}\Xs\neg\varphi_{k-1}$. Thus every possible choice of $D$ satisfies the operand of the outer modality, and $\ISys_{\ISyss},q_i\lmodelss{\CL}\varphi_k$.

\noindent $(\mathbf{\Leftarrow})$ Suppose $b_i=\bot$ and fix an arbitrary strategy of $C$ at $q_i$, determining claims $c_1,\myDots,c_{i-1}$. If some earlier claim is incorrect, let $j<i$ be the least index such that $c_j\neq b_j$. If $c_j=\bot$ and $b_j=\top$, $D$ selects $j$ and the successor is $n_j$. Since $k-1\geq j$, the induction hypothesis gives $\ISys_{\ISyss},q_j\lmodelss{\CL}\varphi_{k-1}$, so~\eqref{eq:cl-negative-lemma} implies $\ISys_{\ISyss},n_j\not\lmodelss{\CL}\Cstrateta[D]{s}\Xs\neg\varphi_{k-1}$. If instead $c_j=\top$ and $b_j=\bot$, minimality of $j$ gives $c_h=b_h$ for every $h<j$. Hence, the positive state stores the correct prefix, but $Q_j$ is false; by~\eqref{eq:cl-positive-lemma}, it does not satisfy $\Cstrateta[W]{s}\Xs\mathsf{yes}$. Finally, if all earlier claims are correct, $D$ selects $i$; since $Q_i$ is false,~\eqref{eq:cl-positive-lemma} again makes the positive branch fail. Therefore, every strategy of $C$ can be refuted, and $\ISys_{\ISyss},q_i\not\lmodelss{\CL}\varphi_k$. This proves~\eqref{eq:cl-lex-invariant}. Appendix~\ref{app:cl-succinct-construction} gives a polynomial-size succinct representation realising the construction above. Hence, $\mc{\CL}$ under the subjective semantics is $\ptime^{\nexptime}$-hard.
\end{proof}

\section{Related Work and Discussion}

The complexity of strategic reasoning over succinctly represented systems has been considered in several earlier works. \citeauthor{van2006complexity}~\cite{van2006complexity} studied the complexity of model checking ATL and Coalition Logic over the Simple Reactive Modules Language, a language capable of succinctly modelling multi-agent systems and used by the model checker Mocha~\cite{alur1998mocha}. However, their work focuses on the perfect information setting. \citeauthor{gutierrez2018imperfect}~\cite{gutierrez2018imperfect} later studied imperfect-information Reactive Modules Games, including equilibrium-existence questions under several strategy classes. Since Nash Equilibrium is expressible by \SLK, such problems are complementary to the strategic-logic model-checking problems studied here.

The most closely related work is by \citeauthor{huang2015complexity}~\cite{huang2015complexity}, who studied model checking over concurrent and succinct representations of multi-agent systems for $\ATLeta{o}$ and $\ATLSeta{o}$. However, their $\ATLS$ syntax differs from the conventional one considered in this paper, which can be described below\/:
\begin{align*}
    \varphi &::= p \mid \neg \varphi \mid \varphi_{1} \wedge \varphi_{2} \mid \langle A\rangle \phi \mid \E \psi \\
    \psi &::= \varphi \mid \neg \psi \mid \psi_{1} \wedge \psi_{2} \mid \Xs \psi \mid \psi_{1} \U \psi_{2}.
\end{align*}
Under their syntax, the conventional objective coalition modality corresponds to $\Cstrato \equiv \langle A\rangle\A\psi$. 

The authors claimed both $\mc{\ATLH}$ and $\mc{\ATLSH}$ to be $\nexptime$-complete. However, both their lower and upper bound proofs are invalid. For the lower bound, they propose a target formula of the form $\Cstrateta{o} \Xs f$, which lies in $\CLeta{o}$, for which succinct model checking is only $\pspace$-complete. Their membership argument infers $\nexptime$ from an exponential-time computation with polynomially many alternations; that simulation establishes an $\aexp(poly)$ upper bound, not the claimed $\nexptime$ bound. By modifying the technique introduced in the proof of Theorem~\ref{th:atl-succinct}, the techniques introduced in the proof of Theorem~\ref{th:atl-succinct} yield the following corrected classification.
\begin{theorem}
    Both $\mc{\ATLH}$ and $\mc{\ATLSH}$ are $\ptime^{\nexptime}$-complete.
\end{theorem}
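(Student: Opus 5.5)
We prove both bounds by reduction to the conventional objective setting of Theorem~\ref{th:atl-succinct}, with the separated syntax $\langle A\rangle\varphi$ and $\E\psi$ taking the place of $\Cstrato[A]\psi$.

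\textbf{Upper bound.} We reuse the bottom-up labelling procedure of Theorem~\ref{th:atl-succinct}. The key observation is that in the syntax of \cite{huang2015complexity}, $\langle A\rangle\varphi$ fixes a uniform memoryless profile for $A$, and that profile persists for the evaluation of $\varphi$. Moreover, $\varphi$ is a Boolean combination of path-quantified formulas $\E\psi$, evaluated over the outcomes compatible with this profile. We therefore process the formula in layers, each delimited by a maximal $\langle A\rangle$ operator.

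\begin{itemize}
\item Inside a layer, once the inner $\langle\cdot\rangle$-subformulas are replaced by fresh propositions, the body $\varphi$ is a CTL$^*$-like state formula.
\item For each state $s$, an $\np$ query guesses a profile $\profile\strat_A$. It restricts the explicit system accordingly and checks $\varphi$ at $s$ on this restricted system.
\item That check is CTL$^*$ model checking of a polynomial-size formula. It is polynomial in the (exponential) restricted system, except for the $2^{O(|\psi|)}$ automata for the $\E\psi$ subformulas.
\item Those automata can be built once by the exponential-time caller and passed into the oracle input, as was done with $\mathcal B_{\neg\widehat\psi}$.
\end{itemize}

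The verifier is then polynomial in the oracle input. The procedure uses $O(|\varphi|)$ rounds of exponentially many parallel $\np$ queries, so it lies in $\exptime^{\np}_{\parallel\langle\mathrm{poly}\rangle}$, which equals $\ptime^{\nexptime}$ as shown in the proof of Theorem~\ref{th:atl-succinct}.

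\emph{Subtlety to check.} Nested $\E$ operators inside a single $\langle A\rangle$ layer must all be evaluated under the same fixed profile. We handle this by labelling states inside the restricted system within the oracle's verification, rather than by the caller. A polynomial-time CTL$^*$ labelling over the given automata achieves this.

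\textbf{Lower bound.} We reuse the succinct system $\ISyss$ and the $\Sigma_1$-LexMax reduction for $\SOQBFOU$ from Theorem~\ref{th:atl-succinct} without change. We then translate the formulas: $\varphi_{k+1}\DefMath\langle\Agpp\rangle\A\chi_{k+1}$, and $\Cstrateta[\emptyset]{o}\Xs\neg\varphi_k$ becomes $\langle\emptyset\rangle\A\Xs\neg\varphi_k$, or simply $\A\Xs\neg\varphi_k$. The identity $\Cstrato\psi\equiv\langle A\rangle\A\psi$ noted above needs one check: that this equivalence is preserved under nesting.

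\emph{Main obstacle: scoping of inner operators.} In the separated syntax, an inner $\langle\emptyset\rangle$ or a bare $\A$ occurring inside $\langle\Agpp\rangle$ may inherit the outer profile, whereas in the conventional semantics it is evaluated afresh. We will resolve this by making the inner operator a fresh strategic context. In the semantics of \cite{huang2015complexity}, $\langle B\rangle$ rebinds exactly the agents of $B$, so we use $\langle\Agp\rangle$-free forms as follows.

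\begin{itemize}
\item At $n_j$ the relevant property is $\Xs\neg\varphi_k$ at a state with a unique successor $q_j$.
\item $\varphi_k$ itself begins with $\langle\Agpp\rangle$, which rebinds all coalition agents.
\item Agents outside $\Agpp$ are never bound.
\item The inherited bindings therefore become irrelevant once $\varphi_k$ is re-entered. The stage variable is observable, and the new profile only acts at stages $\le j$.
\end{itemize}

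If the semantics keeps outer bindings for agents not in the new coalition, this argument still applies, because every nested occurrence uses the same coalition $\Agpp$. We will verify that the invariant \eqref{eq:atl-lex-invariant} and its inductive proof go through verbatim under this reading. The resulting formula is in $\ATLH$, which gives hardness for both $\mc{\ATLH}$ and $\mc{\ATLSH}$.
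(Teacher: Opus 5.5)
Your upper bound is the paper's argument: the same bottom-up labelling, with the $\np$ verifier checking a CTL$^*$ residual under the guessed profile. Precomputing one B\"uchi automaton per $\E$-subformula and labelling inside the oracle is an acceptable substitute for the paper's single hesitant alternating automaton. Your handling of several $\E$'s under one profile is also correct. The paper's lower bound is the one-line inclusion $\ATL\subseteq\ATLH\subseteq\ATLSH$ via $\Cstrato[A]\psi\equiv\langle A\rangle\A\psi$, which is exactly what your formula translation amounts to.

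What needs fixing is your ``main obstacle'' discussion, which contradicts your own upper bound. The step ``replace inner $\langle\cdot\rangle$-subformulas by fresh propositions'' labels each $\langle B\rangle$-subformula in an earlier round, in the unrestricted system. That is sound only if every $\langle B\rangle$ is evaluated independently of any enclosing profile. Under that reading $\langle\emptyset\rangle$ is a fresh context by definition and the translation is compositional, so nothing in \eqref{eq:atl-lex-invariant} needs re-verifying. A bare $\A$ at $n_j$ is also harmless, since $n_j$ has a unique successor.

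Under the reading you hedge towards, where outer bindings of agents outside $B$ persist, your replacement step is unsound, and the $\ptime^{\nexptime}$ bound should not be expected at all. With disjoint coalitions, $\langle A_1\rangle\neg\langle A_2\rangle\neg\cdots$ then simulates the alternating strategy prefix of the reduction in Theorem~\ref{th:slkp-succinct}. That gives $\aexp(poly)$-hardness already for $\ATLH$. So showing that the lower bound survives that reading buys nothing. Fix the fresh-evaluation semantics once, prove both bounds under it, and drop the hedge.
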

\begin{proof}
    The lower bound trivially holds since $\ATL \subseteq \ATLH \subseteq \ATLSH$. For the upper bound, we only need to modify the algorithm proposed in the proof of Theorem~\ref{th:atl-succinct} as follows: after evaluating proper strategic subformulas bottom-up, the residual objective is a \CTLS formula; replacing the B\"uchi automaton by the standard hesitant alternating automaton for \CTLS~\cite{kupferman2000automata} ensures each oracle query can still be resolved in $\np$.  
\end{proof}

\section{Conclusion}
We studied strategic model checking under imperfect information when the system is given by a succinct symbolic description rather than by an explicitly enumerated state space. This perspective is motivated by the representation used in practical model checkers such as \emph{Mocha}, \emph{MCK}, and \emph{MCMAS}. To do so, we focus on a variant of epistemic strategy logic \SLKP, which extends \SLK with partial binding, and study the succinct model-checking complexity of this logic and its syntactic fragments such as \SLK, \ATLS, \ATL, and \CL under both the subjective and objective semantics. Our results show that succinct representation substantially changes the complexity landscape. 

For explicit systems, uniform-memoryless \SLK model checking is PSPACE-complete; for succinct systems, we showed that both \SLK and the partial-binding extension \SLKP are $\aexp(poly)$-complete. For ATL-style logics, the exact succinct complexity is different. Objective $\ATL$ and $\ATLS$ model checking are $\ptime^{\nexptime}$-complete, correcting the earlier $\nexptime$ classification for succinct imperfect information systems. Under the subjective interpretation, the same $\ptime^{\nexptime}$ complexity already occurs for one-step Coalition Logic and therefore persists for $\ATL$ and $\ATLS$. We also showed that the separated strategic/path syntax used by Huang, Chen, and Su has the same $\ptime^{\nexptime}$-complete complexity. 

Note that all results in this paper are stated for an abstract Boolean succinct representation that captures the same explicit-versus-symbolic distinction exploited by modeling languages such as SRML~\cite{van2006complexity} and ISPL~\cite{mcmasmanual2016}. A natural direction for future work is to connect these bounds more directly to concrete modeling languages and symbolic implementations, and to investigate whether the classifications can guide new verification algorithms that avoid explicit state-space expansion. For example, the complexity of \ATLS and \ATL model checking under explicit representations differs significantly, which may help explain why model-checking systems such as \emph{MCMAS} support \ATL rather than \ATLS. For the abstract succinct representation studied here, however, the two logics have the same asymptotic model-checking complexity. It would therefore be interesting to determine whether this observation transfers to concrete symbolic input languages such as ISPL and whether it can inform an efficient extension of \emph{MCMAS} to support \ATLS.

\section*{Acknowledgements}
\addcontentsline{toc}{section}{Acknowledgements}

We would like to thank Enrico Malizia for his detailed clarifications regarding his work on the Exponential Hierarchies, including the proof of the $\ptime^{\nexptime} = \exptime^{\np}_{\parallel\langle\mathrm{poly}\rangle}$ result used in the membership proof of Theorem~\ref{th:atl-succinct}. GPT-5.6-Sol was used to assist with proofs and to improve grammar and clarity of the paper.  

\printbibliography

\appendix
\section[Complexity of LexMax]{Complexity of $\Sigma_{1}$-LexMax $\SOQBFOU$}
\label{sec:complexity-lex1-soqbfou}
\begin{lemma}
$\Sigma_{1}$-LexMax for $\SOQBFOU$ is $\ptime^{\nexptime}$-complete.    
\end{lemma}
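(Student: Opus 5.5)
Membership and hardness are handled separately, with hardness reduced to the known $\SOQBFO$ case. \textbf{Membership.} Membership is immediate: $\SOQBFOU$ is a syntactic restriction of $\SOQBFO$. Hence $\Sigma_1$-LexMax for $\SOQBFOU$ is a special case of $\Sigma_1$-LexMax for $\SOQBFO$, which is in $\ptime^{\nexptime}$ by Theorem~5.8 of~\cite{malizia2024hausdorff}. Alternatively, one can determine $b_1,\myDots,b_w$ bit by bit. Each query $Q_i$ is an existential second-order sentence: guess the exponential-size truth tables of the $\funct{r}$ and the suffix, then check all $2^n$ valuations of $Y$. So each query is an $\nexptime$ oracle call, and $w$ sequential calls suffice.

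\textbf{Hardness.} I would give a polynomial-time many-one reduction from $\Sigma_1$-LexMax for general $\SOQBFO$ (hard by Theorem~5.8 of~\cite{malizia2024hausdorff}). The reduction maps $\Psi(\overline X)=\exists\overline F\,\forall Y\,\Phi$ to a fixed-argument CNF formula $\Psi'(\overline X)$ with the same free variables $\overline X$, such that for every $\tau:\overline X\to\mathbb B$ we have $\emptyset,\tau\lmodels{so}\Psi$ iff $\emptyset,\tau\lmodels{so}\Psi'$. Preserving the set of satisfying assignments to $\overline X$ exactly preserves the lexicographic maximum and its last bit, and satisfiability of the input is preserved too. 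The transformation proceeds in two steps.

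First, \emph{Ackermann expansion}. For each function variable $\funct{i}$, list its occurrences $\funct{i}(\bar t_1),\myDots,\funct{i}(\bar t_p)$, whose arguments lie in $Y\cup\{\top,\bot\}$. For every occurrence $j$, introduce a fresh function variable $g_{i,j}$ applied to a fresh tuple of universal variables $\bar z_{i,j}$ of matching arity. Its argument tuple $D_{i,j}=\bar z_{i,j}$ is then fixed. Replace the $j$-th occurrence by $g_{i,j}(\bar z_{i,j})$. To recover the intended semantics, add the consistency constraint that whenever $\bar z_{i,j}=\bar t_j$ (componentwise equivalences, with constants handled as literals), the value of $g_{i,j}$ is used, and that $g_{i,j}$ and $g_{i,j'}$ agree when their arguments agree. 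A cleaner way to realise this is to replace the matrix by
\[
\forall Y\,\forall\bar Z\;\Bigl(\bigwedge_{i,j}(\bar z_{i,j}\leftrightarrow\bar t_j)\rightarrow\Phi[\funct{i}(\bar t_j)/g_{i,j}(\bar z_{i,j})]\Bigr)\wedge\bigwedge_{i,j,j'}\bigl(\bar z_{i,j}\leftrightarrow\bar z_{i,j'}\rightarrow (g_{i,j}(\bar z_{i,j})\leftrightarrow g_{i,j'}(\bar z_{i,j'}))\bigr).
\]
The correctness argument has two directions. Given functions $f_i$, set every $g_{i,j}:=f_i$. Conversely, the agreement clauses force all $g_{i,j}$ to be the same function $f_i$, and the guarded matrix then evaluates $\Phi$ correctly at the unique $\bar Z$ matching $\bar t$. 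Since the quantifiers remain $\exists\overline G\,\forall Y\bar Z$, we stay inside the $\exists$-only fragment, and the size grows only quadratically.

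Second, \emph{Tseitin transformation} to bring the matrix back into CNF. The standard Tseitin transformation introduces existentially quantified auxiliary variables, but the prefix here only allows $\forall$ after the function quantifiers. I would therefore skolemize: each Tseitin variable $t_\ell$ becomes a fresh existential function variable $h_\ell(Y,\bar Z)$ with the fixed argument tuple consisting of all universal variables. The definitional clauses $t_\ell\leftrightarrow(\ldots)$ then become CNF clauses over fixed-argument applications, and one adds a unit clause asserting the output gate. This is equivalent because the gate values are functionally determined by $Y,\bar Z$ and the $g$-values. I expect this step to be the main obstacle: one must check that the Skolemized Tseitin variables respect the single fixed tuple requirement and do not accidentally create mixed argument tuples. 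One must also verify that the equivalence holds \emph{for each fixed} $\tau$, not merely for satisfiability. Both points follow because the gate values are uniquely determined, so the Skolem functions are forced.
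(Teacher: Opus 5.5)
Your proposal is correct and follows the paper's proof essentially step for step. Membership comes from $\SOQBFOU\subseteq\SOQBFO$, and hardness comes from an assignment-preserving Ackermann expansion followed by a Tseitin transformation. The expansion uses a fresh function per occurrence on a fresh fixed universal tuple, a guarded matrix, and consistency constraints. The Tseitin auxiliaries become existential functions on the tuple of all universal variables. The only differences are cosmetic: you use pairwise rather than chained consistency constraints, and you add an optional direct bit-by-bit membership argument.
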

\begin{proof}
Membership follows immediately from $\SOQBFOU\subseteq \SOQBFO$ and the $\ptime^{\nexptime}$ upper bound for $\Sigma_{1}$-LexMax for $\SOQBFO$~\cite{malizia2024hausdorff}.

For hardness, we reduce from $\Sigma_1$-LexMax on $\SOQBFO$, which is $\ptime^{\nexptime}$-complete~\cite{malizia2024hausdorff}. Let $\Psi(\overline{X})= \exists \funct{1}\myDots\exists \funct{m}\,\forall Y\,\Phi$
be a satisfiable $\SOQBFO$ formula, where $\overline{X}=\{x_1,\ldots,x_w\}$ is the ordered set of free propositional variables and $Y=\{y_1,\ldots,y_n\}$. By definition, every argument of a function belongs to $Y\cup\{\top,\bot\}$. In particular, no free variable from $\overline{X}$ occurs as an argument of a function application. We may assume without loss of generality that every quantified function variable occurs in $\Phi$, since unused function quantifiers can simply be removed.

To show the lower bound, it suffices to construct in polynomial time an $\SOQBFOU$ formula $\Psi^{u}(\overline{X})$ such that, for every assignment $\tau_X:\overline{X}\rightarrow \mathbb{B}$,
\begin{equation}
\emptyset,\tau_X\lmodels{so}\Psi(\overline{X})
\quad\text{iff}\quad
\emptyset,\tau_X\lmodels{so}\Psi^{u}(\overline{X}).
\tag{A.1}    
\end{equation}
The first phase of the construction adapts the Ackermann expansion~\cite{jiang2023second,ackermann1954solvable} technique used in the proof of Lemma~3.10 of~\cite{hannula2016quantified}.

Let $M^{+}=\{q\in\{1,\ldots,m\}\mid |\funct{q}|>0\}$ be the set of indices of the positive-arity function variables. Functions of arity zero already satisfy the fixed-argument restriction and therefore require no transformation. For every $q\in M^{+}$, let $k_q$ be the arity of $\funct{q}$ and let $r_q$ be the number of different occurrences of $\funct{q}$ in $\Phi$. We denote these occurrences by $\funct{q}(D_{q,1}),\myDots,\funct{q}(D_{q,r_q})$,
where $D_{q,i}=(a_{q,i,1},\myDots,a_{q,i,k_q})$ and each $a_{q,i,\ell}\in Y\cup\{\top,\bot\}$.

For the $i$-th occurrence of $\funct{q}$, we introduce a fresh $k_q$-ary existential function variable $\overline{g}_{q,i}$ and a fresh tuple $Z_{q,i}=(z_{q,i,1},\myDots,z_{q,i,k_q})$ of universally quantified propositional variables. We replace the occurrence $\funct{q}(D_{q,i})$ by $\overline{g}_{q,i}(Z_{q,i})$. For tuples of equal length, we write $Z_{q,i}\equiv Z_{q,j}$ as an abbreviation for $\bigwedge_{\ell=1}^{k_q}(z_{q,i,\ell}\leftrightarrow z_{q,j,\ell})$. To ensure that the fresh functions introduced for different occurrences of the same original function represent the same Boolean function, we define
\begin{equation}
\label{eq:cons}
\fcons{q}=\bigwedge_{i=1}^{r_q-1}\left[(Z_{q,i}\equiv Z_{q,i+1})
\rightarrow
\left(\overline{g}_{q,i}(Z_{q,i}) \leftrightarrow \overline{g}_{q,i+1}(Z_{q,i+1}) \right) \right].
\tag{A.2}
\end{equation}
Since the variables in $Z_{q,i}$ and $Z_{q,i+1}$ are independently universally quantified, \eqref{eq:cons} implies that $\overline{g}_{q,i}(b)=\overline{g}_{q,i+1}(b)$ for every $b\in\mathbb{B}^{k_q}$. Consequently, all $\overline{g}_{q,i}$ associated with $\funct{q}$ denote the same Boolean function.

We also associate each $Z_{q,i}$ with its original argument tuple $D_{q,i}$. We define
$$
\textnormal{Match}(z,a)=
\begin{cases}
z\leftrightarrow a, & a\in Y,\\
z, & a=\top,\\
\neg z, & a=\bot,
\end{cases}
$$
and let
\begin{equation}
\label{eq:args}
\fargs{q,i}
=
\bigwedge_{\ell=1}^{k_q}
\textnormal{Match}(z_{q,i,\ell},a_{q,i,\ell}).
\tag{A.3}
\end{equation}
Concretely, $\fargs{q,i}$ holds exactly when $Z_{q,i}$ represents the value of the original argument tuple $D_{q,i}$.

We further let $G=\{\overline{g}_{q,i}\mid q\in M^{+},\ 1\leq i\leq r_q\}$ be the set of all freshly introduced function variables, and let $Z$ be the set of all freshly introduced universal propositional variables. Let $F^{0}=\{\funct{q}\mid |\funct{q}|=0\}$ be the set of zero-arity original function variables. We write $\exists G$ and $\exists F^{0}$ for existential quantification over all functions in the respective sets. Let $\Phi^{G}$ be obtained from $\Phi$ by replacing every positive-arity occurrence $\funct{q}(D_{q,i})$ by $\overline{g}_{q,i}(Z_{q,i})$, while leaving the zero-arity function occurrences unchanged. We define
\[
\fcons{}
=
\bigwedge_{q\in M^{+}}\fcons{q}
\qquad\text{and}\qquad
\fargs{}
=
\bigwedge_{q\in M^{+}}
\bigwedge_{i=1}^{r_q}\fargs{q,i}.
\]
Consider the formula
\begin{equation}
\label{eq:hatpsi}
\widehat{\Psi}(\overline{X})
=
\exists F^{0}\,\exists G\,\forall Y\,\forall Z\,
\left[
\fcons{}\wedge
(\fargs{}\rightarrow\Phi^{G})
\right].
\tag{A.4}
\end{equation}
We first prove that, for every assignment $\tau_X:\overline{X}\rightarrow\mathbb{B}$,
\begin{equation}
\label{eq:pointwise-hat}
\emptyset,\tau_X\lmodels{so}\Psi(\overline{X})
\quad\text{iff}\quad
\emptyset,\tau_X\lmodels{so}\widehat{\Psi}(\overline{X}).
\tag{A.5}
\end{equation}

\noindent$\mathbf{(\Rightarrow)}$ Suppose that $\emptyset,\tau_X\lmodels{so}\Psi(\overline{X})$, and fix witnessing Boolean functions for $\funct{1},\ldots,\funct{m}$. For every $q\in M^{+}$ and every $1\leq i\leq r_q$, interpret $\overline{g}_{q,i}$ as the witnessing function assigned to $\funct{q}$; for each function in $F^{0}$ keep its original witness. It is clear that $\fcons{}$ holds under every valuation of $Z$. For any arbitrary valuations of $Y$ and $Z$, if $\fargs{}$ is false, $\fargs{}\rightarrow\Phi^{G}$ is trivially true. Otherwise, by \eqref{eq:args}, every tuple $Z_{q,i}$ has exactly the value of its original argument tuple $D_{q,i}$. Therefore, every $\overline{g}_{q,i}(Z_{q,i})$ has the same truth value as the original mapping $\funct{q}(D_{q,i})$. Hence, $\Phi^{G}$ evaluates exactly as $\Phi$, which is true by the fact that $\Psi$ is true. Consequently, $\emptyset,\tau_X\lmodels{so}\widehat{\Psi}(\overline{X})$.

\noindent$\mathbf{(\Leftarrow)}$ Suppose that $\emptyset,\tau_X\lmodels{so}\widehat{\Psi}(\overline{X})$, and fix witnessing interpretations for $F^{0}\cup G$. For each $q\in M^{+}$, interpret the original function variable $\funct{q}$ as the Boolean function represented by $\overline{g}_{q,1}$. This is well-defined because \eqref{eq:cons}, together with the universal quantification of the tuples $Z_{q,i}$, implies that all $\overline{g}_{q,i}$ associated with $\funct{q}$ are the same mapping. Note that we keep the witnessing interpretations of the zero-arity functions in $F^{0}$ the same in $\widehat{\Psi}(\overline{X})$ and $\Psi(\overline{X})$.

Suppose that $\tau_Y: Y \rightarrow\mathbb{B}$ is an arbitrary assignment, extended by $\tau_Y(\top)=\top$ and $\tau_Y(\bot)=\bot$. Since variables in $Z$ are universally quantified in \eqref{eq:hatpsi}, the matrix must hold for the valuation satisfying $z_{q,i,\ell}=\tau_Y(a_{q,i,\ell})$ for every $q\in M^{+}$, $1\leq i\leq r_q$, and $1\leq\ell\leq k_q$. Under this valuation, each $Z_{q,i}$ takes exactly the value of the original argument tuple $D_{q,i}$, so $\fargs{}$ is true and hence $\Phi^{G}$ is true. Moreover, by $\fcons{q}$, all fresh functions $\overline{g}_{q,i}$ associated with the same original function $\funct{q}$ denote the same Boolean function. We may therefore use this common function as the interpretation of $\funct{q}$. It follows that each occurrence $\overline{g}_{q,i}(Z_{q,i})$ in $\Phi^{G}$ has the same truth value as the corresponding original occurrence $\funct{q}(D_{q,i})$ in $\Phi$. Thus, $\Phi$ is true under $\tau_Y$. Since $\tau_Y$ is arbitrary, $\emptyset,\tau_X\lmodels{so}\Psi(\overline{X})$. This proves \eqref{eq:pointwise-hat}.

The matrix in \eqref{eq:hatpsi} is not necessarily in CNF. We therefore apply the standard Tseitin transformation~\cite{tseitin1983complexity}. Let $U$ be a fixed ordered tuple containing all universally quantified variables in $Y \cup Z$. Instead of introducing ordinary Tseitin variables in the transformation process, for every auxiliary variable we introduce a fresh existential Boolean function $h$, used only in the form $h(U)$ in the CNF matrix $C$. The usual definitional clauses then yield, in polynomial time, a CNF matrix $C$ and a formula $\Psi^u(X)=\exists F^0\,\exists G\,\exists H\,\forall U\,C$, where $H$ is the set of freshly introduced auxiliary function variables, that is equivalent to $\widehat{\Psi}(\overline{X})$ for every fixed assignment to $\overline{X}$. This is sufficient because, once the assignment to $\overline{X}$ and the existential function interpretations are fixed, the value of each subformula depends only on the universally quantified variables in $U$.

By construction, every $\overline{g}_{q,i}$ occurs only with its fixed tuple $Z_{q,i}$, every Tseitin function occurs only with the fixed tuple $U$, and the functions in $F^0$ are zero-ary. Hence $\Psi^u(\overline{X})$ is an $\SOQBFOU$ formula. 

Moreover, the construction is polynomial in the size of $\Psi$. Concretely, the function-splitting step introduces one fresh function and one fresh argument tuple for each original positive-arity function occurrence, together with polynomially many matching and consistency constraints. The Tseitin transformation introduces only polynomially many auxiliary functions and clauses. Thus $|\Psi^{u}|$ is polynomial in $|\Psi|$.

For every assignment $\tau_X:\overline{X}\rightarrow\mathbb{B}$, we have $\emptyset,\tau_X\lmodels{so}\Psi(\overline{X})$ iff $
\emptyset,\tau_X\lmodels{so}\Psi^{u}(\overline{X})$. 
Therefore, $\Psi$ and $\Psi^{u}$ have the same satisfying assignments over the original free-variable set $\overline{X}$. The transformation introduces no new free variables and preserves their ordering. Consequently, the lexicographically maximal satisfying assignment is unchanged. This gives a polynomial reduction from $\Sigma_1$-LexMax for $\SOQBFO$ to $\Sigma_1$-LexMax for $\SOQBFOU$. Hence, $\Sigma_1$-LexMax for $\SOQBFOU$ is $\ptime^{\nexptime}$-hard.

\end{proof}

\section[ATLir]{Construction of the succinct \ATL hardness gadget}
\label{app:atl-succinct-construction}
We formally define the succinct representation used in the lower bound proof of Theorem~\ref{th:atl-succinct}. Consider an $\SOQBFOU$: $\Psi(\overline X)=\exists\funct{1}\myDots\exists\funct{m}\,\forall Y\,\Phi$, where
$\overline X=\{x_1,\myDots,x_w\}$, $Y=\{y_1,\myDots,y_n\}$, and $D_r$ is the fixed ordered argument tuple of $\funct{r}$.

We introduce a set $Z=\{z_1,\myDots,z_\ell\}$ whose binary combination represents the stage counter, where $\ell=\lceil\log_2(w+1)\rceil$. We abbreviate $[i]$ as the Boolean formula saying that the integer encoded by $Z$ is $i$, and $[i]'$ for the corresponding (next state) formula over $Z'$. We also write $[j<Z]$ for the standard polynomial-size Boolean comparator asserting that the integer encoded by $Z$ is strictly greater than $j$. 

$\ISyss$ consists of the following components. First, $\Var_\env=R\cup Z\cup\widetilde X\cup\widetilde Y$, where $\widetilde{X}=\{\widetilde {x_1},\myDots,\widetilde{x_w}\}$, $\widetilde Y=\{\widetilde{y_1},\myDots,\widetilde {y_n}\}$, and $R=\{r_c,r_d,r_y,r_f,\mathsf{neg},\mathsf{yes},\mathsf{bad}\}$. In particular, $R$ consists of the phase variables.

We introduce the following abbreviations. Note that all of the following Boolean formulas have size polynomial in $\Psi(\overline{X})$. For $p\in R$, we define
\[
\mathsf{phase}(p)=p\wedge\bigwedge_{q\in R\setminus\{p\}}\neg q,
\qquad
\mathsf{phase}'(p)=p'\wedge\bigwedge_{q\in R\setminus\{p\}}\neg q'.
\]

For a subset $W \subseteq \Var_\env$ of environment variables, we abbreviate
\[
\mathsf{keep}(W)=\bigwedge_{v\in W}(v'\leftrightarrow v),
\qquad
\mathsf{zero}(W)=\bigwedge_{v\in W}\neg v,
\qquad
\mathsf{zero}'(W)=\bigwedge_{v\in W}\neg v'.
\]
The canonical stage-entry state $q_i$ is the valuation satisfying
$\mathsf{phase}(r_c)\wedge[i]\wedge\mathsf{zero}(\widetilde X\cup\widetilde Y)$, and the negative-check state $n_i$ is the valuation satisfying
$\mathsf{phase}(\mathsf{neg})\wedge[i]\wedge\mathsf{zero}(\widetilde X\cup\widetilde Y)$.

The set of agents is\/: $\Ag=\{C_1,\myDots,C_w\}\cup\{F_1,\myDots,F_m\}
\cup\{U_1,\myDots,U_n\}\cup\{D\}$.

For every agent $a$, we define $\Var_a=\emptyset$, 
$\Init_a=\top$, and 
$\Trn_a=\top$.

For each agent, the set of observable environment variables is:
\[
\OVar_{C_j}=R\cup Z,
\qquad
\OVar_{F_r}=R\cup Z\cup\{\widetilde y_\ell\mid y_\ell\text{ occurs in }D_r\},
\]
and
\[
\OVar_{U_\ell}=R\cup Z,
\qquad
\OVar_D=R\cup Z.
\]
In particular, $F_r$ observes, in addition to the phase and stage, only the values of its fixed argument tuple $D_r$.

We next specify the actions and protocols. For every $C_j$, we define
$\Act_{C_j}=\{\bot,\top,\mathsf{no}\}$ and
\[
\Prot_{C_j}=
\Bigl(r_c\wedge
  \bigl(([j]\wedge\mathsf{act}_{C_j,\top})
  \vee(\neg[j]\wedge(\mathsf{act}_{C_j,\bot}\vee\mathsf{act}_{C_j,\top}))\bigr)\Bigr)
\vee(\neg r_c\wedge\mathsf{act}_{C_j,\mathsf{no}}).
\]
Hence, at stage $j$ the current bit is forced to $\top$, while at every other stage $C_j$ may choose either Boolean value. For every $U_\ell$, we define
$\Act_{U_\ell}=\{\bot,\top,\mathsf{no}\}$ and
\[
\Prot_{U_\ell}=
r_y\wedge(\mathsf{act}_{U_\ell,\bot}\vee\mathsf{act}_{U_\ell,\top})
\vee(\neg r_y\wedge\mathsf{act}_{U_\ell,\mathsf{no}}).
\]
Similarly, for every $F_r$, we define $\Act_{F_r}=\{\bot,\top,\mathsf{no}\}$ and
\[
\Prot_{F_r}=
r_f\wedge(\mathsf{act}_{F_r,\bot}\vee\mathsf{act}_{F_r,\top})
\vee(\neg r_f\wedge\mathsf{act}_{F_r,\mathsf{no}}).
\]
Finally,
\[
\Act_D=\{\mathsf{test},\mathsf{check}_1,\myDots,
\mathsf{check}_{w-1},\mathsf{no}\}.
\]
Using the comparator abbreviation above, we define
\[
\Prot_D=
\left(
 r_d\wedge\left(
 \mathsf{act}_{D,\mathsf{test}}
 \vee\bigvee_{j=1}^{w-1}
   (\mathsf{act}_{D,\mathsf{check}_j}\wedge[j<Z])
 \right)
\right)
\vee(\neg r_d\wedge\mathsf{act}_{D,\mathsf{no}}).
\]
Every protocol depends only on the corresponding observable variables and action variables, so indistinguishable states offer the same legal actions.

It remains to define the environment transition formula. The commitment transition is
\[
T_c=
\mathsf{phase}(r_c)\wedge\mathsf{phase}'(r_d)
\wedge\mathsf{keep}(Z)
\wedge\bigwedge_{j=1}^{w}
  (\widetilde x_j'\leftrightarrow\mathsf{act}_{C_j,\top})
\wedge\mathsf{zero}'(\widetilde Y).
\]
Thus the $C_j$-actions are latched into $\widetilde X$ before the challenger moves. If $D$ chooses the test branch, use
\[
T_d^{\mathsf{test}}=
\mathsf{phase}(r_d)\wedge\mathsf{act}_{D,\mathsf{test}}
\wedge\mathsf{phase}'(r_y)
\wedge\mathsf{keep}(Z\cup\widetilde X)
\wedge\mathsf{zero}'(\widetilde Y).
\]
For a check of $j$, define
\begin{align*}
T_{d,j}^{+}&=
\mathsf{phase}(r_d)\wedge\mathsf{act}_{D,\mathsf{check}_j}
\wedge\widetilde x_j
\wedge\mathsf{phase}'(r_c)\wedge[j]'
\wedge\mathsf{zero}'(\widetilde X\cup\widetilde Y),\\
T_{d,j}^{-}&=
\mathsf{phase}(r_d)\wedge\mathsf{act}_{D,\mathsf{check}_j}
\wedge\neg\widetilde x_j
\wedge\mathsf{phase}'(\mathsf{neg})\wedge[j]'
\wedge\mathsf{zero}'(\widetilde X\cup\widetilde Y).
\end{align*}
The protocol of $D$ ensures that these check actions are legal only when $j$ is smaller than the current stage. Therefore, a positive claim jumps directly to $q_j$, whereas a negative claim first reaches $n_j$.

The negative-check state has the unique transition
$$T_{\mathsf{neg}}=
\mathsf{phase}(\mathsf{neg})\wedge\mathsf{phase}'(r_c)
\wedge\mathsf{keep}(Z)
\wedge\mathsf{zero}'(\widetilde X\cup\widetilde Y)$$
where $n_j$ has the unique successor $q_j$.

On the test branch, the $U_\ell$ agents generate the universal valuation by
\[
T_y=
\mathsf{phase}(r_y)\wedge\mathsf{phase}'(r_f)
\wedge\mathsf{keep}(Z\cup\widetilde X)
\wedge\bigwedge_{\ell=1}^{n}
 (\widetilde y_\ell'\leftrightarrow\mathsf{act}_{U_\ell,\top}).
\]
Let $\widehat\Phi$ be obtained from the source matrix $\Phi$ by replacing
\[
x_j\mapsto\widetilde x_j,
\qquad
y_\ell\mapsto\widetilde y_\ell,
\qquad
\funct r(D_r)\mapsto\mathsf{act}_{F_r,\top}.
\]
At phase $r_f$, the observation of $F_r$ is exactly the current stage together with the values of $D_r$, so its uniform memoryless strategy corresponds to the mapping $\funct r(D_r)$. We evaluate the matrix after the action of $F_{r}$ has been performed. The sink $q_{yes}$ is reached if the matrix evaluates to true, while $q_{bad}$ is reached otherwise. We define\/:
\begin{align*}
T_f^{+}&=
\mathsf{phase}(r_f)\wedge\widehat\Phi
\wedge\mathsf{phase}'(\mathsf{yes})
\wedge\mathsf{zero}'(Z\cup\widetilde X\cup\widetilde Y),\\
T_f^{-}&=
\mathsf{phase}(r_f)\wedge\neg\widehat\Phi
\wedge\mathsf{phase}'(\mathsf{bad})
\wedge\mathsf{zero}'(Z\cup\widetilde X\cup\widetilde Y).
\end{align*}

The two sinks self-loop via
\[
T_{\mathsf{yes}}=
\mathsf{phase}(\mathsf{yes})\wedge\mathsf{phase}'(\mathsf{yes})
\wedge\mathsf{keep}(Z\cup\widetilde X\cup\widetilde Y),
\]
\[
T_{\mathsf{bad}}=
\mathsf{phase}(\mathsf{bad})\wedge\mathsf{phase}'(\mathsf{bad})
\wedge\mathsf{keep}(Z\cup\widetilde X\cup\widetilde Y).
\]

To summarise, the overall environmental transition is given as\/:
\[
\Trn_\env=
T_c\vee T_d^{\mathsf{test}}
\vee\bigvee_{j=1}^{w-1}(T_{d,j}^{+}\vee T_{d,j}^{-})
\vee T_y\vee T_f^{+}\vee T_f^{-}
\vee T_{\mathsf{neg}}\vee T_{\mathsf{yes}}\vee T_{\mathsf{bad}}.
\]
The unique initial state is $q_w$, enforced by
\[
\Init_\env=
\mathsf{phase}(r_c)\wedge[w]
\wedge\mathsf{zero}(\widetilde X\cup\widetilde Y).
\]

One can check that the succinct representation is polynomial in the size of $\Phi(\overline{X})$. Hence, $\mc{\ATLeta{o}}$ is $\ptime^{\nexptime}$-hard.

\section[CLir]{Construction of the succinct subjective \CL hardness gadget}
\label{app:cl-succinct-construction}
We formally define the succinct representation used in the lower-bound proof of Theorem~\ref{th:cl-subjective-succinct}. Let
\[
\Psi(\overline X)=\exists\funct{1}\myDots\exists\funct{m}\,\forall Y\,\Phi,
\]
where $\overline X=\{x_1,\myDots,x_w\}$, $Y=\{y_1,\myDots,y_n\}$, and $D_r$ is the fixed ordered argument tuple of $\funct r$.

As in Appendix~\ref{app:atl-succinct-construction}, let $Z=\{z_1,\myDots,z_\ell\}$, where $\ell=\lceil\log_2(w+1)\rceil$, encode the current stage. We use $[i]$ and $[i]'$ for the current- and next-state encodings of stage $i$, and use the standard polynomial-size comparator abbreviations $[j<Z]$ and $[Z<j]$. Let
\[
R=\{r_q,\mathsf{pos},\mathsf{neg},\mathsf{yes},\mathsf{bad}\},
\qquad
\widetilde X=\{\widetilde x_1,\myDots,\widetilde x_w\},
\qquad
\widetilde Y=\{\widetilde y_1,\myDots,\widetilde y_n\}.
\]
The environment variables are
\[
\Var_\env=R\cup Z\cup\widetilde X\cup\widetilde Y.
\]
For $p\in R$, define
\[
\mathsf{phase}(p)=p\wedge\bigwedge_{q\in R\setminus\{p\}}\neg q,
\qquad
\mathsf{phase}'(p)=p'\wedge\bigwedge_{q\in R\setminus\{p\}}\neg q',
\]
For $V\subseteq\Var_\env$, use the abbreviations
\[
\mathsf{keep}(V)=\bigwedge_{v\in V}(v'\leftrightarrow v),
\qquad
\mathsf{zero}(V)=\bigwedge_{v\in V}\neg v,
\qquad
\mathsf{zero}'(V)=\bigwedge_{v\in V}\neg v'.
\]
The canonical stage-entry state $q_i$ satisfies
\[
\mathsf{phase}(r_q)\wedge[i]\wedge\mathsf{zero}(\widetilde X\cup\widetilde Y),
\]
and the negative-check state $n_i$ satisfies
\[
\mathsf{phase}(\mathsf{neg})\wedge[i]\wedge\mathsf{zero}(\widetilde X\cup\widetilde Y).
\]
For $\mathbf c=(c_1,\myDots,c_{j-1})\in\mathbb B^{j-1}$ and $\nu:Y\to\mathbb B$, the positive-check state $p_{j,\mathbf c,\nu}$ satisfies $\mathsf{phase}(\mathsf{pos})\wedge[j]$, stores $c_h$ in $\widetilde x_h$ for $h<j$, sets $\widetilde x_j=\top$ and $\widetilde x_h=\bot$ for $h>j$, and stores $\nu(y_\ell)$ in $\widetilde y_\ell$.

The agents are
\[
\Ag=\{C_1,\myDots,C_w\}\cup\{F_1,\myDots,F_m\}
\cup\{U_1,\myDots,U_n\}\cup\{D\}.
\]
For every agent $a$, let $\Var_a=\emptyset$, $\Init_a=\top$, and $\Trn_a=\top$. The observable environment variables are fixed as follows:
\[
\OVar_{C_h}=R\cup Z\cup\widetilde X,
\]
\[
\OVar_{F_r}=R\cup Z\cup\widetilde X
\cup\{\widetilde y_\ell\mid y_\ell\text{ occurs in }D_r\},
\]
and
\[
\OVar_{U_\ell}=R\cup Z,
\qquad
\OVar_D=R\cup Z.
\]
Thus every $C_h$ observes the stage and stored prefix but no variable of $Y$, while $F_r$ additionally observes exactly the values of its fixed argument tuple $D_r$.


We next define the actions and protocols. For every $C_h$, let
\[
\Act_{C_h}=\{\bot,\top,\mathsf{no}\}.
\]
At a stage-entry state $q_i$, $C_h$ chooses a Boolean value exactly when $h<i$; at a positive-check state of stage $j$, it chooses a Boolean value exactly when $j<h$. Formally,
\begin{align*}
\Prot_{C_h}={}&
\mathsf{phase}(r_q)\wedge
\Bigl(
([h<Z]\wedge(\mathsf{act}_{C_h,\bot}\vee\mathsf{act}_{C_h,\top}))
\vee
(\neg[h<Z]\wedge\mathsf{act}_{C_h,\mathsf{no}})
\Bigr)\\
&\vee
\mathsf{phase}(\mathsf{pos})\wedge
\Bigl(
([Z<h]\wedge(\mathsf{act}_{C_h,\bot}\vee\mathsf{act}_{C_h,\top}))
\vee
(\neg[Z<h]\wedge\mathsf{act}_{C_h,\mathsf{no}})
\Bigr)\\
&\vee
\bigl(\neg\mathsf{phase}(r_q)
\wedge\neg\mathsf{phase}(\mathsf{pos})
\wedge\mathsf{act}_{C_h,\mathsf{no}}\bigr).
\end{align*}
For every $F_r$, let $\Act_{F_r}=\{\bot,\top,\mathsf{no}\}$ and
\[
\Prot_{F_r}=
\mathsf{phase}(\mathsf{pos})\wedge
(\mathsf{act}_{F_r,\bot}\vee\mathsf{act}_{F_r,\top})
\vee
\neg\mathsf{phase}(\mathsf{pos})\wedge\mathsf{act}_{F_r,\mathsf{no}}.
\]
For every $U_\ell$, let $\Act_{U_\ell}=\{\bot,\top,\mathsf{no}\}$ and
\[
\Prot_{U_\ell}=
\mathsf{phase}(r_q)\wedge
(\mathsf{act}_{U_\ell,\bot}\vee\mathsf{act}_{U_\ell,\top})
\vee
\neg\mathsf{phase}(r_q)\wedge\mathsf{act}_{U_\ell,\mathsf{no}}.
\]
Finally, let
\[
\Act_D=\{\mathsf{pick}_1,\myDots,\mathsf{pick}_w,\mathsf{no}\}
\]
and
\[
\Prot_D=
\mathsf{phase}(r_q)\wedge
\bigvee_{j=1}^{w}
\left(
\mathsf{act}_{D,\mathsf{pick}_j}\wedge([j]\vee[j<Z])
\right)
\vee
\neg\mathsf{phase}(r_q)\wedge\mathsf{act}_{D,\mathsf{no}}.
\]
Thus at stage $i$, $D$ may select exactly one $j\leq i$. All protocols depend only on the corresponding observable variables and actions.

We now define the environment transition formula. For every $1\leq j\leq w$, let
\[
\mathsf{store}_j=
\left(
\bigwedge_{h<j}(\widetilde x_h'\leftrightarrow\mathsf{act}_{C_h,\top})
\right)
\wedge\widetilde x_j'
\wedge
\left(
\bigwedge_{h>j}\neg\widetilde x_h'
\right)
\wedge
\bigwedge_{\ell=1}^{n}
(\widetilde y_\ell'\leftrightarrow\mathsf{act}_{U_\ell,\top}).
\]
A positive challenge of $j$ is taken either when $j$ is the current stage, or when $j$ is an earlier stage whose claim is true:
\begin{align*}
T_{q,j}^{+}={}&
\mathsf{phase}(r_q)\wedge\mathsf{act}_{D,\mathsf{pick}_j}
\wedge
\bigl([j]\vee([j<Z]\wedge\mathsf{act}_{C_j,\top})\bigr)\\
&\wedge\mathsf{phase}'(\mathsf{pos})\wedge[j]'
\wedge\mathsf{store}_j.
\end{align*}
A false claim leads to the negative-check state:
\[
T_{q,j}^{-}=
\mathsf{phase}(r_q)\wedge\mathsf{act}_{D,\mathsf{pick}_j}
\wedge[j<Z]\wedge\mathsf{act}_{C_j,\bot}
\wedge\mathsf{phase}'(\mathsf{neg})\wedge[j]'
\wedge\mathsf{zero}'(\widetilde X\cup\widetilde Y).
\]
The negative-check state has the unique transition
\[
T_{\mathsf{neg}}=
\mathsf{phase}(\mathsf{neg})\wedge\mathsf{phase}'(r_q)
\wedge\mathsf{keep}(Z)
\wedge\mathsf{zero}'(\widetilde X\cup\widetilde Y),
\]
so $n_j$ has the unique successor $q_j$.

At a positive-check state of stage $j$, let $\widehat\Phi_j$ be obtained from $\Phi$ by replacing
\[
x_h\mapsto\widetilde x_h\quad(h\leq j),
\qquad
x_h\mapsto\mathsf{act}_{C_h,\top}\quad(h>j),
\]
\[
y_\ell\mapsto\widetilde y_\ell,
\qquad
\funct r(D_r)\mapsto\mathsf{act}_{F_r,\top}.
\]
The matrix is evaluated in one step:
\begin{align*}
T_{p,j}^{+}&=
\mathsf{phase}(\mathsf{pos})\wedge[j]\wedge\widehat\Phi_j
\wedge\mathsf{phase}'(\mathsf{yes})
\wedge\mathsf{keep}(Z)
\wedge\mathsf{zero}'(\widetilde X\cup\widetilde Y),\\
T_{p,j}^{-}&=
\mathsf{phase}(\mathsf{pos})\wedge[j]\wedge\neg\widehat\Phi_j
\wedge\mathsf{phase}'(\mathsf{bad})
\wedge\mathsf{keep}(Z)
\wedge\mathsf{zero}'(\widetilde X\cup\widetilde Y).
\end{align*}
The two sinks self-loop:
\[
T_{\mathsf{yes}}=
\mathsf{phase}(\mathsf{yes})\wedge\mathsf{phase}'(\mathsf{yes})
\wedge\mathsf{keep}(Z\cup\widetilde X\cup\widetilde Y),
\]
\[
T_{\mathsf{bad}}=
\mathsf{phase}(\mathsf{bad})\wedge\mathsf{phase}'(\mathsf{bad})
\wedge\mathsf{keep}(Z\cup\widetilde X\cup\widetilde Y).
\]
Hence
\[
\Trn_\env=
\bigvee_{j=1}^{w}(T_{q,j}^{+}\vee T_{q,j}^{-})
\vee T_{\mathsf{neg}}
\vee\bigvee_{j=1}^{w}(T_{p,j}^{+}\vee T_{p,j}^{-})
\vee T_{\mathsf{yes}}\vee T_{\mathsf{bad}}.
\]
The unique initial state is $q_w$, enforced by
\[
\Init_\env=
\mathsf{phase}(r_q)\wedge[w]
\wedge\mathsf{zero}(\widetilde X\cup\widetilde Y).
\]

We finally verify the epistemic properties used in the proof of Theorem~\ref{th:cl-subjective-succinct}. Every $q_i$ is the unique reachable state with phase $r_q$, stage $i$, and $\widetilde X=0$, and hence $[q_i]_{\sim_{C_h}}=\{q_i\}$ for every $C_h$. Likewise, $[n_j]_{\sim_D}=\{n_j\}$. Every state $p_{j,\mathbf c,\nu}$ is reachable: first reach $q_j$ from $q_w$ by a negative challenge if $j<w$, and then choose the prefix $\mathbf c$, let $D$ select $j$, and let the $U_\ell$ agents choose $\nu$. Since every $C_h$ observes $R\cup Z\cup\widetilde X$ but no variable in $\widetilde Y$,
\[
[p_{j,\mathbf c,\nu}]_{\sim_{C_h}}
=
\{p_{j,\mathbf c,\mu}\mid\mu:Y\to\mathbb B\}.
\]
Each $F_r$ additionally observes exactly $D_r$, so its equivalence class is a subset of this set. Therefore, for $W=C\cup\{F_1,\myDots,F_m\}$, $\operatorname{Alt}_{W}(p_{j,\mathbf c,\nu})
=
\{p_{j,\mathbf c,\mu}\mid\mu:Y\to\mathbb B\}$, as required. The construction uses $O(w+n+\log w)$ environment variables, polynomially many actions, and Boolean formulas of polynomial size in $|\Psi|$.

\end{document}